\documentclass{amsart} 

\usepackage[latin1,utf8]{inputenc}
\usepackage{amsfonts,bbm}
\usepackage{amssymb}
\usepackage{graphicx}
\usepackage{graphics}
\usepackage{lscape}
\usepackage{comment}
\usepackage{mathrsfs}
 
\newcommand{\QI}{q}
\newcommand{\E}{\mathbbm{E}}
\renewcommand{\P}{\mathbbm{P}}
\usepackage{amsmath}

\newtheorem{theorem}{Theorem}

\newtheorem{lemma}{Lemma}

\title[Processor Sharing Queue with batches]{{Asymptotic analysis of  the sojourn time of a batch in an $M^{[X]}/M/1$ Processor Sharing Queue}}

\author[F. Guillemin]{Fabrice Guillemin}
\address[F. Guillemin]{Orange Labs Networks Lannion, 2 avenue Pierre Marzin, 22307 Lannion Cedex, France}
\author[A. Simonian]{Alain Simonian}
\author[R. Nasri]{Ridha Nasri}
\address[A. Simonian and R. Nasri]{ Orange Labs, DATA-IA, Orange Gardens, 44 avenue de la République, CS 50010, 92326 Châtillon Cedex,  France}
\author[V. Quintuna]{Veronica Quintuna Rodriguez}
\address[V. Quintuna]{Orange Labs Networks Lannion, 2 avenue Pierre Marzin, 22307 Lannion Cedex, France}
\email{\{first\_name.last\_name\}@orange.com}

\begin{document}

\date{Version of \today}

\begin{abstract}
In this paper, we exploit  results obtained in an earlier study for the Laplace transform of the sojourn time $\Omega$ of an entire batch in the $M^{[X]}/M/1$ Processor Sharing (PS) queue in order to derive the asymptotic behavior of the complementary probability distribution function of this random variable, namely the behavior of $\P(\Omega>x)$ when $x$ tends to infinity. We precisely show that up to a multiplying factor, the behavior of $\P(\Omega>x)$ for large $x$ is of the same order of magnitude as $\P(\omega>x)$, where $\omega$ is the sojourn time of an arbitrary job is the system. From a practical point of view, this means that if a system  has to be dimensioned to guarantee  processing time for jobs then the system can  also guarantee processing times for entire batches by introducing a marginal amount of processing capacity. 

\end{abstract}

\keywords{Batch $M/M/1$ queue; Processor sharing; Sojourn time; Laplace transform; Infinite linear system.}

\maketitle


\section{Introduction}
\label{Sec:Intro}
In this paper, we consider an $M^{[X]}/M/1$ queue, where geometrically distributed batches (or bulks) of jobs arrive according to a Poisson process with rate $\rho$ and individual jobs require exponential service times with unit mean;  the probability that the  size $B$ of a batch is equal to $b$ is  $\P(B=b)=(1-q) q^{b-1}$ for some $q \in (0,1)$. We shall further assume that jobs are served according to the Processor Sharing discipline so that the server capacity is equally shared among all jobs present in the system. The system under consideration is denoted, for short, by $M^{[X]}/M/1$-PS queue.

The  occupancy of $M^{[X]}/M/1$-PS queue is well studied in the technical literature  \cite{grossharris,Klein0}. The sojourn time of an arbitrary job is however  more difficult as it involves complex correlations between jobs present in the system. The mean waiting time of a job in the system was computed  by Kleinrock \emph{et al} in \cite{Klein71} for arbitrary batch sizes via the derivation of an integral equation for the sojourn time of a tagged job conditioned on its service time. In the case of geometric batch sizes, the complete distribution of the sojourn time of a job has been derived in \cite{GQS}.

In this paper, we consider the sojourn time $\Omega$ of an arbitrary batch in the system, i.e., the time elapsed between the arrival of the batch and the departure of the last job of the batch considered. This quantity appears as a relevant performance parameter when considering the execution of batch of jobs in a cloud system; see notably the design of CloudRAN  \cite{veronica3,veronica2,cloudran}. In this context, the execution time of an entire batch is critical when considering the capacity of the system of encoding an entire frame within a tight time frame.  Processor sharing queue See also \cite{ayesta, bansal} for other applications.

The quantity $\Omega$ is defined as follows: Given a tagged batch with size $B = b$, $b \geqslant 1$, the sojourn time $\Omega$ equals the maximum
\begin{equation}
\Omega = \max_{1 \leqslant k \leqslant b} \omega_k
\label{defWbar}
\end{equation}
of the sojourn times $\omega_k$, $1 \leqslant k \leqslant b$, of the jobs of the tagged batch. Throughout this paper, we assume that 
$\rho+q < 1$ so that the system is stable.

The analysis of the random variable $\Omega$ is very complex because of the correlations between the sojourn times of the various jobs of a batch. The Laplace transform of $\Omega$ has nevertheless be derived in \cite{guillemin2020sojourn}. On the basis of the expression of the Laplace transform, we study in this paper the tail of the complementary distribution of random variable $\Omega$, precisely the asymptotic behavior of $\P(\Omega >x)$ when $x$ tends to infinity. 

An approximation for this quantity has been developed in \cite{itc31} by considering the residual busy period after the arrival of a tagged batch  and by assuming that the jobs of the tagged batch leave the system  at random among those jobs present in the system after the arrival of the tagged batch. In this paper, we are more accurate as we exploit the closed form of the Laplace transform of $\Omega$.

As in in \cite{itc31}, the asymptotic behavior of the complementary distribution of the sojourn time  $\omega$ of an arbitrary job serves as a baseline for comparison. For dimensioning purposes \cite{cloudran}, it is important to estimate how different are the sojourn times of a single job and that of an entire batch. 

Let us recall  that as shown in \cite{GQS}, we have as $x$ tends to infinity
\begin{equation}
    \label{tailomega}
\P(\omega>x) \sim \frac{2(1-\rho-q)(1-q)}{(\sigma_q^+)^2}e^{\frac{\sqrt{1-q}+ \sqrt{\rho}}{\sqrt{1-q}-\sqrt{\rho}}} \mathcal{D}_q(x),
\end{equation}
where
\begin{equation}
    \label{defQ}
    \mathcal{D}_q(x) = \frac{c_q(\rho)}{x^{\frac{5}{6}}}\exp\left(\sigma_q^+ x-b_q(\rho)  x^{\frac{1}{3}}\right),
\end{equation}
with $\sigma_q^+=-(\sqrt{\rho}-\sqrt{1-q})^2$ and
\begin{eqnarray*}
c_q(\rho)&=& \frac{1}{2^{\frac{1}{3}}\sqrt{3}}\left(\frac{\pi}{\sqrt{\rho(1-q)} }\right)^{\frac{5}{6}} , \\
b_q(\rho) &=& 3\left(\frac{\pi}{2}  \right)^{\frac{2}{3}}(\rho(1-q))^{\frac{1}{6}}.
\end{eqnarray*}

The major result of this paper states that $\P(\omega>x)$ and $\P(\Omega>x)$ have the same asymptotic behavior up to a constant prefactor. In other words,
$$
\frac{\P(\omega>x)}{\P(\Omega>x)}= O(1).
$$
This means that for extreme sojourn times, batches do not stay in the system much longer than jobs. Hence, if we dimension a system on the basis of jobs then batches will not experience too long sojourn times.

The organization of this paper is as follows: In Section~\ref{preliminary} we recall basic results from \cite{guillemin2020sojourn} and notably the Laplace transform of $\Omega$. The various terms appearing in this Laplace are inverted in Section~\ref{lapinversion}, which allows us to derive the asymptotic behavior of the cumulative distribution of random variable $\Omega$. Some concluding remarks are presented in Section~\ref{conclusion}. Technical proofs are deferred to a series of appendices.

\section{Preliminary results}
\label{preliminary}

We recall in this section the closed form of  the Laplace transform of the sojourn time $\Omega$ of an entire batch in the $M^{[X]}/M/1$ PS queue; this result has been obtained in \cite{guillemin2020sojourn}. To introduce this closed form, let us first define the polynomial $P(s;u)$ in $u$ by
\begin{equation}
    \label{defPTheta}
P(s;u) = u^2-(1+\rho+q+s) u +q+qs +\rho.
\end{equation}
The roots of $P(s;u)$ are
$$
U_q^{\pm}(s) = \frac{s+1+\rho+q \pm \sqrt{\delta(s)}}{2},
$$
where the discriminant $\delta(s) = s^2+2(1+\rho-q)s+(1-\rho-q)^2$. We note that $\delta(s)\geq 0$ for $s\leq \sigma_q^-$ or $s\geq \sigma_q^+$ and $\delta(s)<0$ for $s\in (\sigma_q^-,\sigma_q^+)$, where
$$
\sigma_q^\pm = -(\sqrt{1-q}\mp \sqrt{\rho})^2.
$$
The roots $U_q^\pm(s)$ satisfy for $s>0$
$$
q<U_q^-(s)<1<U_q^+(s).
$$
In the following, we shall consider $U_q^\pm(s)$ for complex values of $s$. In that case, $\sqrt{\delta(s)}$ shall denote the analytic extension of the function $\sqrt{\delta(s)}$ in the cut plane $\mathbb{C}\setminus[\sigma_q^-,\sigma_q^+]$ such that $\sqrt{\delta(0)}>0$.

Let us further introduce the function $R(s;\xi)$ defined by
\begin{equation}
R(s;\xi) = \left(1-\frac{\xi}{U_q^-(s)}\right)^{C_q^-(s) -1}  \left(1-\frac{\xi}{U_q^+(s)}\right)^{C_q^+(s)-1} ,
    \label{defR}
\end{equation}
where
\begin{equation}
C_q^+(s)= - \frac{U_q^-(s) - \QI}{U_q^+(s)  - U_q^-(s)}<0,
\quad
C_q^-(s)= 1-C_q^+(s)=- \frac{U_q^+(s) - \QI}{U_q^-(s) -U_q^+(s)}>1.
\label{C+-}
\end{equation}
On the basis of the function $R(s;t)$,  we further define  the function
$$
\label{defK}
\mathfrak{R}(s;u,v) = \frac{R(s;v)}{R(s;u)}= \left(\frac{v-U_q^-(s)}{u-U_q^-(s)}\right)^{C_q^-(s)-1} \left(\frac{v-U_q^+(s)}{v-U_q^+(s)}\right)^{C_q^+(s)-1}
$$
to simplify the notation.

For $s>0$, let us now consider function  $\mathcal{T}(s;w)$, which is solution  to the equation
\begin{equation}
    \label{defT}
 1-T + w T^{1-C_q^+(s)} =0   
\end{equation}
such that $\mathcal{T}(s;0)=1$. This function is instrumental in the computation of the Laplace of the random variable $\Omega$. For the properties of this function, we refer to  \cite{Polya} . In particular, we have the series expansion
$$
\mathcal{T}(s;w)= 1 +\sum_{n=1}^\infty \binom{n(1-C_q^+(s))}{n-1}\frac{w^n}{n}.
$$

To compute the Laplace transform of the  random variable $\Omega$, we introduce the bivariate generating function
\begin{equation}
  E(s;u,v) = \sum_{n=0}^\infty\sum_{b=1}^\infty e^*_{n,b}(s) U_q^n v^b  ,
\end{equation}
where $e^*_{n,b}(s)$ is the Laplace transform of the random variable $\Omega_{n,b}$ denoting the sojourn time of a batch of size $b$ given that there are $n$ jobs in the system upon the arrival of the batch. In other words, 
$$e^*_{n,b}(s) \stackrel{def}{=} \mathbb{E}(e^{-s \Omega_{n,b}}). $$
The above generating function is defined for $(u,v)\in \mathbb{D}^2$, where $\mathbb{D}$ is the unit disk in $\mathbb{C}$, and $\Re(s) \geq 0$ since by definition $e^*_{n,b}(s) \leq 1$.

Associated with  function $E(s;u,v)$ is the function $F(s;u,v)$ defined by
\begin{equation}
F(s;u,v) = 
\left\{
\begin{array}{ll}
\displaystyle \frac{E(s;u,v) - E(s;\QI,v)}{u-\QI}, & \Re(s) \geqslant 0, \; u \in \mathbb{D} \setminus \{\QI\}, \; v \in \mathbb{C},
\\ \\
\displaystyle \displaystyle \frac{\partial E}{\partial u}(s;\QI,v), &
\Re(s) s \geqslant 0, \; u = \QI,  \; v \in \mathbb{C}.
\end{array} \right.
\label{defFTheta}
\end{equation}
By definition, function $F$ is clearly analytic in 
$\{s \; \vert \; s > 0\} \times \mathbb{D} \times \mathbb{D}$ and it is obviously equivalent to determine either function $E$ or $F$. As shown in \cite{guillemin2020sojourn}, function $F(s;u,v)$ is solution to the following linear first order partial differential equation
\begin{multline}
 u P(s;u) \, \frac{\partial F}{\partial u}(s;u,v) + 
v \left [ \rho(1-\QI) - (s+1+\rho-v)(u-\QI) \right ] \, 
\frac{\partial F}{\partial v}(s;u,v) \\
+ \; \left [ u(u-s-1-\rho) + (u-\QI)(u+v) \right ] \ F(s;u,v)  +  L(s;u,v) = 0 
\label{PDEF0}
\end{multline}
{with polynomial $P$ introduced in Equation~\eqref{defPTheta}, and}
\begin{equation}
L(s;u,v) =L_0(u,v) + (u+v) \, E(s;\QI,v) -v (s+1+\rho-v)  
\frac{\partial E}{\partial v}(s;\QI,v),
\label{defL}
\end{equation}
{where}
$$
L_0(u,v) = \frac{v}{(1-u)}.
$$

To ensure the analyticity of the solution to Equation~\eqref{PDEF0}, function $E(s;q,v)$ shall be defined as follows.

\begin{lemma}
The function $E(s;q,v)$ is given by
\begin{equation}
    \label{eqvexp}
E(s;q,v) =  \frac{Q_0(s;v)}{(U_q^+(s)-U_q^-(s))P(s;v)}  \int_0^{U_q^-(s)}\Psi_0(s;\mathcal{T}(s;x(s) R(s;\xi)X(s;v))) \frac{d \xi}{1-\xi},
\end{equation}
where $x(s) = 1-\frac{U_q^-(s)}{U_q^+(s)}$,  the first order polynomial 
\begin{equation}
    \label{defQ0}
    Q_0(s;v) = U_q^+(s) U_q^-(s) -q v,
\end{equation}
the function
\begin{equation}
    \label{defX}
    X(s;v) =  \frac{v}{v-U_q^-(s)}\left(\frac{1-\frac{v}{U_q^-(s)}}{1-\frac{v}{U_q^+(s)}}  \right)^{C_q^+(s)} = \frac{-U_q^+(s) v}{P(s;v)R(s;v)}
\end{equation}
and the function
\begin{equation}
        \label{defPsi0}
\Psi_0(s;t) = \frac{t(1-t)}{(C_q^+(s) t +1-C_q^+(s))^3}.
\end{equation}
\end{lemma}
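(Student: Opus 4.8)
The plan is to solve the first-order transport equation~\eqref{PDEF0} for $F(s;u,v)$ by the method of characteristics and then to use the fact, recorded just above, that $F$ is analytic in the polydisk $\mathbb{D}\times\mathbb{D}$. This is the classical ``boundary unknown'' mechanism attached to such kernel equations: the unknown functions $E(s;\QI,v)$ and $\partial_v E(s;\QI,v)$ enter~\eqref{PDEF0} only through the source term $L(s;u,v)$ of~\eqref{defL}, whereas the coefficient $u\,P(s;u)$ of $\partial_u F$ vanishes at the interior point $u=U_q^-(s)\in(\QI,1)\subset\mathbb{D}$. A generic solution of~\eqref{PDEF0} is singular there, so analyticity is a genuine constraint, and it turns out to determine the boundary data uniquely, producing~\eqref{eqvexp}.

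First I would write the characteristic curves of~\eqref{PDEF0} in the $(u,v)$-plane,
$$\frac{du}{u\,P(s;u)}=\frac{dv}{v\left[\rho(1-\QI)-(s+1+\rho-v)(u-\QI)\right]},$$
parametrised by $u$. The $u$-dynamics $\dot u=u\,P(s;u)=u\,(u-U_q^-(s))(u-U_q^+(s))$ has equilibria at $0$, $U_q^-(s)$ and $U_q^+(s)$, and a characteristic issued from the interior of $\mathbb{D}$ is attracted to the interior equilibrium $u=U_q^-(s)$. The companion equation for $v$ integrates in closed form, its solution being expressed through the Lagrange-inversion function $\mathcal{T}(s;\cdot)$ of~\eqref{defT} studied in~\cite{Polya}; the functions $R(s;\xi)$ of~\eqref{defR} and $X(s;v)$ of~\eqref{defX} play the roles of integrating factor and characteristic label respectively, the exponents $C_q^\pm(s)-1$ in $R$ coming from the partial-fraction decomposition $\dfrac{1}{P(s;\xi)}=\dfrac{1}{U_q^+(s)-U_q^-(s)}\left(\dfrac{1}{\xi-U_q^+(s)}-\dfrac{1}{\xi-U_q^-(s)}\right)$. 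Concretely, the integral in~\eqref{eqvexp} arises from integrating the transport equation along the characteristics that reach the critical point $u=U_q^-(s)$: $\xi$ is the running coordinate along such a characteristic, from the boundary value $\xi=0$ to $\xi=U_q^-(s)$, $R(s;\xi)$ is the accumulated integrating factor, $X(s;v)$ records which characteristic, and $\mathcal{T}\big(s;x(s)\,R(s;\xi)\,X(s;v)\big)$ is the corresponding running value of $v$.

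Along each characteristic the equation for $F$ is a scalar linear first-order ODE in $u$, whose solution carries an integrating factor built from the zeroth-order coefficient $u(u-s-1-\rho)+(u-\QI)(u+v)$ (after the substitution $v=v(u,\xi)$) together with a free constant of integration. Performing this reduction and simplifying is where the first-order polynomial $Q_0(s;v)$ of~\eqref{defQ0}, the rational function $\Psi_0(s;t)$ of~\eqref{defPsi0} and the prefactor $\big((U_q^+(s)-U_q^-(s))\,P(s;v)\big)^{-1}$ emerge, while the source term $L_0(u,v)=v/(1-u)$ contributes the factor $(1-\xi)^{-1}$ in the integrand. Requiring $F$ to be analytic at the interior zero $u=U_q^-(s)\in\mathbb{D}$ of $u\,P(s;u)$ --- where the homogeneous part of the solution would otherwise diverge --- fixes the constant of integration, and isolating in the resulting identity the terms that carry $E(s;\QI,v)$ and $\partial_v E(s;\QI,v)$ through~\eqref{defL} yields~\eqref{eqvexp}. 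It then remains to verify that the right-hand side of~\eqref{eqvexp} is genuinely analytic: the apparent pole at the root $v=U_q^-(s)$ of $P(s;v)$ that lies in $\mathbb{D}$ is removable, because $X(s;v)\to\infty$ as $v\to U_q^-(s)$ while $\mathcal{T}(s;w)\to 0$ as $w\to\infty$, so that the integrand of~\eqref{eqvexp} vanishes as $v\to U_q^-(s)$ and the integral has a zero of matching order; the other root $v=U_q^+(s)$ lies outside $\mathbb{D}$; and $\mathcal{T}(s;0)=1$ fixes the additive constant so that $E(s;\QI,0)$ comes out correctly.

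The step I expect to be the main obstacle is the algebraic bookkeeping of the change of variables $(u,v)\mapsto(u,\xi)$ together with the proof that the singular part of $F$ at $u=U_q^-(s)$ is annihilated by one and only one choice of boundary data --- that is, that the compatibility condition is simultaneously necessary and sufficient, and that it can actually be solved in the closed form~\eqref{eqvexp} rather than only characterised implicitly. A secondary difficulty is to justify all these manipulations for complex $s$ in the cut plane $\mathbb{C}\setminus[\sigma_q^-,\sigma_q^+]$, with consistent branch choices for $\sqrt{\delta(s)}$, $U_q^\pm(s)$, $C_q^\pm(s)$ and for the power functions defining $R(s;\cdot)$ and $X(s;\cdot)$, and to check that the various multivalued factors recombine into a single-valued analytic function of $(s,v)$.
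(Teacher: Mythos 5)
First, a point of order: this lemma is not proved in the present paper at all. Section~\ref{preliminary} merely \emph{recalls} the closed form of $E(s;q,v)$ from \cite{guillemin2020sojourn}, so there is no internal proof to compare yours against; your attempt has to be judged on its own merits and against the structure visible in the quoted results. On those terms your overall strategy is the natural one and is consistent with that structure: \eqref{PDEF0} is a first-order linear PDE whose $\partial_u$-coefficient $uP(s;u)$ vanishes at the interior point $u=U_q^-(s)\in\mathbb{D}$, the unknown boundary data $E(s;q,\cdot)$ enters only through the source term $L$, analyticity of $F$ at $u=U_q^-(s)$ is the constraint that determines it, and the integral from $u$ to $U_q^-(s)$ in \eqref{Fuv} is visibly a characteristic integral. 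However, your sketch contains one likely misidentification. You assign to $\mathcal{T}\bigl(s;x(s)R(s;\xi)X(s;v)\bigr)$ the role of ``running value of $v$ along the characteristic''; but that role is played by $Z$ of \eqref{defZ}, built from $\mathfrak{R}$, i.e., from the integrated Riccati-type $v$-equation, as the arguments $yZ(s;u,v/u;y)$ of $L_1$ and $L_2$ in \eqref{Fuv} show. The function $\mathcal{T}$, defined implicitly by \eqref{defT} and expanded by Lagrange inversion, appears only inside $E(s;q,v)$, $L_1$ and $L_2$; it must therefore come from \emph{solving} the compatibility condition at $u=U_q^-(s)$ --- a functional equation for the boundary unknown itself --- rather than from the characteristic flow. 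Your sketch never formulates that functional equation, let alone solves it.

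The genuine gap is that every step which distinguishes \eqref{eqvexp} from a generic description of the method of characteristics is deferred: the explicit change of variables $(u,v)\mapsto(u,\xi)$, the derivation of the compatibility condition and the proof that it determines $E(s;q,\cdot)$ uniquely and in closed form, the Lagrange-inversion step that produces $\mathcal{T}$, and the algebra yielding $Q_0$, $X$, $\Psi_0$ and the prefactor $\bigl((U_q^+(s)-U_q^-(s))P(s;v)\bigr)^{-1}$. You acknowledge this yourself in your final paragraph, which is honest, but it means the proposal is a plan rather than a proof; likewise the removability of the pole at $v=U_q^-(s)$ is only asserted (``a zero of matching order''), not checked. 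To complete the argument you would need to carry out the characteristic reduction explicitly, exhibit the compatibility condition as an equation of the form \eqref{defT}, and verify the resulting expression against \eqref{PDEF0} --- or simply cite \cite{guillemin2020sojourn}, as the paper itself does.
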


Before proceeding further, it is worth noting that 
\begin{equation}
    \label{derivePsi0}
    \frac{\partial \Psi_0}{\partial v}(\mathcal{T}(s;x(s)R(s;\xi)X(s;v)) =-\frac{Q_0(s;v)}{vP(s;v)}\Psi_1(s;\mathcal{T}(x(s)R(s;\xi)X(v)),
\end{equation}

\begin{equation}
    \label{defPsi1}
 \Psi_1(x) = \frac{t(1-t)(1-2 t -C_q^+(1-t^2))}{(C_q^+ t +1-C_q^+)^5}.  
\end{equation}

Once the function $E(s;q,v)$ is known, it is possible  to compute the function $F(s;u,v)$ as follows.

\begin{lemma}
Function $F(s;u,v)$, which is the analytic solution $\mathbbm{D}\times \mathbbm{D}$ to Equation~\eqref{PDEF0}, is given by
\begin{align}
&F(s;u,v) =\frac{u}{(u-v)P(s;u)} \int_u^{U_q^-(s)} \left(1-Z\left(s;u,\frac{v}{u};y\right)\right) Z\left(s;u,\frac{v}{u};y\right)\frac{d y}{1-y} \label{Fuv} \\
& +\frac{u}{(u-v)P(s;u)}\int_{u}^{U_q^-(s)} \left(1-Z\left(s;u,\frac{v}{u};y\right)\right) L_1\left(s;y, y Z\left(s;u,\frac{v}{u};y\right)   \right)     \frac{d y}{y} \nonumber \\
& +  \frac{u}{(u-v)P(s;u)}\int_u^{U_q^-(s)}\left(1-Z\left(s;u,\frac{v}{u};y\right)\right) L_2\left(s;y Z\left(s;u,\frac{v}{u};y\right) \right)     \frac{d y}{y},\nonumber
\end{align}
 where
\begin{equation}
    \label{defZ}
    Z(s;u,v;y) =    \frac{v \mathfrak{R}(s;u,y)}{(1-v) + v \mathfrak{R}(s;u,y)},
\end{equation}
together with
\begin{multline}
    \label{defL1}
L_1(s;u,v)=  \frac{1}{U_q^+(s)-U_q^-(s)}  \left(u \frac{Q_0(s;v)}{P(s;v)}   + v \frac{Q_1(s;v)}{P(s;v)^2} \right) \\  \int_0^{U_q^-(s)}\Psi_0(s;\mathcal{T}(s;x R(s;\xi)X(s;v))) \frac{d \xi}{1-\xi} 
\end{multline}
and
\begin{multline}
    \label{defL2}
 L_2(s;v) = \\  \frac{(U_q^+(s)+U_q^-(s)-q-v)Q_0(s;v)^2}{(U_q^+(s)-U_q^-(s))P(s;v)^2} \int_0^{U_q^-(s)}  \Psi_1(s;\mathcal{T}(s;x R(s;\xi)X(s;v)))    \frac{d \xi}{1-\xi}
\end{multline}
with 
\begin{multline}
\label{defQ1}
Q_1(s;v) = (q^2-U_q^+(s)U_q^-(s))v^2 + 2U_q^+(s)U_q^-(s)(U_q^+(s)+U_q^-(s)-2q)v \\  -  U_q^+(s)U_q^-(s)((U_q^+(s)+U_q^-(s) -q)^2-U_q^+(s)U_q^-(s)),
\end{multline}
and $\Psi_0(s;t)$ and $\Psi_1(s;t)$ are  defined by Equation~\eqref{defPsi0} and \eqref{defPsi1}, respectively.
\end{lemma}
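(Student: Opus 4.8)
The statement is the closed-form solution of the first-order linear PDE \eqref{PDEF0}, so the plan is to integrate \eqref{PDEF0} by the method of characteristics — using the expression \eqref{eqvexp} for $E(s;q,v)$ to make the forcing term $L$ concrete — and then to single out the one solution analytic on $\mathbb{D}\times\mathbb{D}$. The first task is to rewrite $L(s;u,v)$. Substituting \eqref{eqvexp} and the $v$-derivative of $E(s;q,v)$, obtained from \eqref{derivePsi0}, into \eqref{defL} and regrouping terms decomposes $L$ into the elementary term $L_0(u,v)=v/(1-u)$, a part carrying the integral $\int_0^{U_q^-(s)}\Psi_0(s;\mathcal{T}(s;x(s)R(s;\xi)X(s;v)))\,\frac{d\xi}{1-\xi}$, and a part carrying $\int_0^{U_q^-(s)}\Psi_1(s;\mathcal{T}(s;x(s)R(s;\xi)X(s;v)))\,\frac{d\xi}{1-\xi}$. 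One then checks that the first equals $L_1(s;u,v)$ of \eqref{defL1} — this is where the polynomial $Q_1$ of \eqref{defQ1} arises, together with $P(s;v)=(v-U_q^-(s))(v-U_q^+(s))$ and $U_q^+(s)+U_q^-(s)=s+1+\rho+q$ — and that the second equals $L_2(s;v)$ of \eqref{defL2}, which uses the identity $s+1+\rho-v=U_q^+(s)+U_q^-(s)-q-v$. The outcome is the clean decomposition $L=L_0+L_1+L_2$.

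Next, parametrise a characteristic of \eqref{PDEF0} through a point $(u,v)$ by a variable $y$; its $v$-component satisfies the Riccati equation $y\,P(s;y)\,\frac{dv}{dy}=v\big[\rho(1-q)-(s+1+\rho-v)(y-q)\big]$. The substitution that linearises it is precisely the Möbius map built into the definition \eqref{defZ} of $Z$, the enabling identity being $R'(s;\xi)/R(s;\xi)=(q-\xi)/P(s;\xi)$, which follows from \eqref{defR}, \eqref{C+-}, and the relations $C_q^+(s)+C_q^-(s)=1$ and $C_q^+(s)U_q^+(s)+C_q^-(s)U_q^-(s)=q$. Normalising the characteristic so that it passes through $(u,v)$ at $y=u$ — where $\mathfrak{R}(s;u,u)=1$ and hence $Z(s;u,w;u)=w$ — gives $v(y)=y\,Z\!\left(s;u,\frac{v}{u};y\right)$.

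Along this characteristic \eqref{PDEF0} reduces to the scalar linear ODE $y\,P(s;y)\,\frac{dF}{dy}+\big[y(y-s-1-\rho)+(y-q)(y+v(y))\big]F=-L(s;y,v(y))$, whose integrating factor is $\mu(y)=(y-v(y))P(s;y)/y$; its reciprocal is exactly the prefactor $\frac{u}{(u-v)P(s;u)}$ of \eqref{Fuv}. Integrating and substituting $L=L_0+L_1+L_2$, using $y-v(y)=y(1-Z)$ and $L_0(y,v(y))=yZ/(1-y)$, reproduces the three integrands $(1-Z)Z\,\frac{dy}{1-y}$, $(1-Z)L_1(s;y,yZ)\,\frac{dy}{y}$, and $(1-Z)L_2(s;yZ)\,\frac{dy}{y}$ of the statement. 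The endpoint of integration is fixed by analyticity: the homogeneous solution along a characteristic is proportional to $1/\mu(y)$ and hence singular at $y=U_q^-(s)$, which for $s>0$ satisfies $q<U_q^-(s)<1$ and so lies in $\mathbb{D}$ while being a zero of $P(s;u)$; thus the only candidate analytic on $\mathbb{D}\times\mathbb{D}$ is the one integrated down to $U_q^-(s)$, the boundary term $\mu(y_0)F(s;y_0,v(y_0))$ vanishing as $y_0\to U_q^-(s)$ because $\mu(U_q^-(s))=0$. One finishes by confirming that the expression so obtained is genuinely analytic on $\mathbb{D}\times\mathbb{D}$: the apparent pole of \eqref{Fuv} at $u=U_q^-(s)$ is removable since the bracketed integral vanishes there, the apparent pole at $u=v$ is removable since $1-Z(s;u,1;y)=0$, and no singularity is met on the path of integration because $U_q^-(s)<1$.

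The main obstacle is the second step: solving the Riccati characteristic equation and recognising that the right change of variable is exactly the Möbius map defining $Z$, which in turn governs the precise shape of the integrating factor $\mu$ and hence of the prefactor in \eqref{Fuv}. A close second is the algebraic bookkeeping of the first step that forces the exact forms of $Q_1$, $L_1$ and $L_2$. Once these are in place, the third step is a guided computation. One could alternatively verify the lemma a posteriori by plugging \eqref{Fuv} directly into \eqref{PDEF0} and checking the analyticity conditions, but the characteristics route is what actually produces the formula.
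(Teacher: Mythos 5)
The paper itself gives no proof of this lemma; it is recalled verbatim from \cite{guillemin2020sojourn}, so there is no in-paper argument to compare against. Judged on its own terms, your characteristics derivation is correct and is the natural way to produce formula \eqref{Fuv}. I checked the load-bearing identities: $R'(s;\xi)/R(s;\xi)=(q-\xi)/P(s;\xi)$ follows from $C_q^++C_q^-=1$ and $C_q^+U_q^++C_q^-U_q^-=q$ exactly as you say; with it, $v(y)=yZ(s;u,v/u;y)$ does satisfy the Riccati characteristic equation, and $\mu(y)=(y-v(y))P(s;y)/y$ is a genuine integrating factor (one finds $\mu'/\mu=\bigl[(y-s-1-\rho)+(y-q)(1+Z)\bigr]/P(s;y)$, matching the zeroth-order coefficient divided by $yP$). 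The algebraic decomposition $L=L_0+L_1+L_2$ also checks out: the $\Psi_1$ term of $-v(s+1+\rho-v)\partial_vE(s;q,v)$ reproduces $L_2$ via \eqref{derivePsi0} and $s+1+\rho-v=U_q^++U_q^--q-v$, and the remaining rational part yields precisely the polynomial $Q_1$ of \eqref{defQ1} (I verified all three coefficients). Your selection of the analytic solution by integrating to the zero $U_q^-(s)\in\mathbb{D}$ of $\mu$, and the removability of the apparent poles at $u=v$ (since $1-Z\propto 1-v/u$) and at $u=U_q^-(s)$, are the right closing arguments. The one point you could make more explicit is the boundary term: you should justify that $\mu(y)F(s;y,v(y))\to 0$ as $y\to U_q^-(s)$ by invoking the boundedness of the analytic candidate $F$ near that point, rather than assuming it; but this is a standard step and does not affect the correctness of the derivation.
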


With the above notation, the  Laplace transform of the random variable $\Omega$ is given by the following result.

\begin{theorem}
The  Laplace transform of the random variable $\Omega$ is given by
\begin{equation}
\label{laptransfoW}
\mathbbm{E}(e^{-s \Omega}) = \frac{1-\rho-q}{q(\rho+q)}\left(\rho^2 F(s;\rho+q,q) + \frac{q^3+\rho(q s +\rho+2q(1-q))E(s;q,q)}{q +\rho+ qs -q^2}   \right),
\end{equation}
where the function $E(s;q,v)$ is given by Equation~\eqref{eqvexp} and the function $F(u,v)$ by Equation~\eqref{Fuv}.
\end{theorem}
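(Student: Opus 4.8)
The plan is to de-average $\mathbb{E}(e^{-s\Omega})$ over the state seen by a tagged batch at its arrival instant and then to recognize the resulting double sum in terms of the generating function $E$ and its companion $F$. Since batches arrive according to a Poisson process, PASTA applies: a tagged batch finds the time-stationary number $N$ of jobs in the system, a quantity independent of the batch's own geometric size. Because the occupancy process of an $M^{[X]}/M/1$ queue is insensitive to the work-conserving service discipline, the stationary law of $N$ is the classical one, and for geometric batch sizes it collapses, after elementary manipulation of the balance equations, to the probability generating function $\Pi(z)=\frac{1-q-\rho}{1-q}\cdot\frac{1-qz}{1-(q+\rho)z}$, whence $p_0=\frac{1-q-\rho}{1-q}$ and $p_n=\frac{(1-q-\rho)\rho}{1-q}(q+\rho)^{n-1}$ for $n\geqslant 1$. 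Conditioning on $N=n$ and on the batch size $b$, one writes $\mathbb{E}(e^{-s\Omega})=\sum_{n\geqslant 0}\sum_{b\geqslant 1}p_n(1-q)q^{b-1}e^*_{n,b}(s)$.

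Next I would split this sum according to whether $n=0$ or $n\geqslant 1$. Since $p_n$ is, apart from the atom at $n=0$, geometric in $q+\rho$, the $n\geqslant 1$ part is proportional to $\sum_{n\geqslant 1,b\geqslant 1}(q+\rho)^n q^b e^*_{n,b}(s)=E(s;q+\rho,q)-E(s;0,q)$, while the $n=0$ part is proportional to $\sum_{b\geqslant 1}q^b e^*_{0,b}(s)=E(s;0,q)$; collecting coefficients gives $\mathbb{E}(e^{-s\Omega})=\frac{1-q-\rho}{q(q+\rho)}\bigl(\rho\,E(s;q+\rho,q)+q\,E(s;0,q)\bigr)$. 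All these evaluations are legitimate because the stability assumption $q+\rho<1$ keeps $(q+\rho,q)$, $(q,q)$ and $(0,q)$ inside the bidisk of convergence of $E$. Then, by the very definition~\eqref{defFTheta} of $F$, i.e.\ $E(s;u,v)=(u-q)F(s;u,v)+E(s;q,v)$ taken at $u=q+\rho$, one replaces $E(s;q+\rho,q)$ by $\rho\,F(s;q+\rho,q)+E(s;q,q)$.

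It remains to express the boundary term $E(s;0,q)=\sum_{b\geqslant 1}q^b e^*_{0,b}(s)$ through $E(s;q,q)$. I would obtain this from a first-step analysis of $\Omega_{0,b}$: a size-$b$ batch alone in the system experiences, after an $\mathrm{Exp}(1+\rho)$ delay, either a departure (rate $1$, since under PS the saturated server completes jobs at total rate $\mu=1$), which leaves $b-1$ tagged jobs alone — remaining sojourn distributed as $\Omega_{0,b-1}$, or an empty system if $b=1$ — or a fresh geometric batch arrival (rate $\rho$), after which the remaining sojourn is distributed as $\Omega_{C,b}$ with $C$ geometric. By the Markov property this yields, for every $b\geqslant 1$, $(1+\rho+s)\,e^*_{0,b}(s)=\I_{\{b=1\}}+\I_{\{b\geqslant 2\}}e^*_{0,b-1}(s)+\rho\sum_{c\geqslant 1}(1-q)q^{c-1}e^*_{c,b}(s)$. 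Multiplying by $q^b$ and summing over $b$, the double sum telescopes to $\frac{\rho(1-q)}{q}\bigl(E(s;q,q)-E(s;0,q)\bigr)$, and solving the resulting scalar equation gives $E(s;0,q)=\frac{q^2+\rho(1-q)E(s;q,q)}{q+\rho+qs-q^2}$.

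Finally I would substitute this identity into $\mathbb{E}(e^{-s\Omega})=\frac{1-q-\rho}{q(q+\rho)}\bigl(\rho^2 F(s;q+\rho,q)+\rho E(s;q,q)+q E(s;0,q)\bigr)$, put $\rho E(s;q,q)+q E(s;0,q)$ over the common denominator $q+\rho+qs-q^2$, and simplify its numerator; since $q+\rho+qs-q^2+q(1-q)=\rho+qs+2q(1-q)$, the numerator becomes exactly $q^3+\rho(qs+\rho+2q(1-q))E(s;q,q)$, which is precisely~\eqref{laptransfoW}. The closed forms of $E(s;q,q)$ and $F(s;\rho+q,q)$ are then supplied by the two preceding lemmas. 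The only genuinely delicate steps are the careful bookkeeping of the two-variable generating function when detaching the $n=0$ level, and above all the justification of the first-step recursion for $e^*_{0,b}$, which hinges on identifying each post-transition configuration with a fresh copy of some $\Omega_{n,b}$ via memorylessness; everything downstream is routine algebra. The substantial analytic effort lies upstream, in solving the PDE~\eqref{PDEF0} that underlies those two lemmas.
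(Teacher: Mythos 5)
The paper states this theorem as a result recalled from \cite{guillemin2020sojourn} and gives no proof of it here, so there is no internal argument to compare against; judged on its own, your derivation is correct. Your route --- PASTA together with the stationary occupancy PGF $\frac{1-q-\rho}{1-q}\cdot\frac{1-qz}{1-(q+\rho)z}$ to get $\E(e^{-s\Omega})=\frac{1-q-\rho}{q(q+\rho)}\bigl(\rho\,E(s;\rho+q,q)+q\,E(s;0,q)\bigr)$, the definition \eqref{defFTheta} to replace $E(s;\rho+q,q)$ by $\rho F(s;\rho+q,q)+E(s;q,q)$, and a first-step equation at the boundary level $n=0$ yielding $E(s;0,q)=\frac{q^2+\rho(1-q)E(s;q,q)}{q+\rho+qs-q^2}$ --- reproduces \eqref{laptransfoW} exactly, and each step (the recursion for $e^*_{0,b}$, which is legitimate by memorylessness and exchangeability of jobs under PS, the telescoping after multiplying by $q^b$, and the final algebra using $q+\rho+qs-q^2+q(1-q)=\rho+qs+2q(1-q)$) checks out. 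This is almost certainly the same aggregation argument underlying the cited reference, since the specific denominator $q+\rho+qs-q^2$ and the prefactor $\frac{1-\rho-q}{q(\rho+q)}$ can hardly arise otherwise.
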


By using the Laplace transform of $\Omega$, we determine the tail of the complimentary probability distribution function of $\Omega$ by inverting the various terms involved in Equation~\eqref{laptransfoW}.

\section{Laplace transform inversion}
\label{lapinversion}

\subsection{Main result}

We assume in this paper that Laplace transforms can be defined in $\mathbbm{C}\setminus [\sigma_q^-, \sigma_q^+]$. The objective of this section is to prove the following result. 

\begin{theorem}
\label{asympOmegath}
For large $x$, we have
\begin{equation}
    \label{asympOmegaeq}
    \P(\Omega>x) \sim -\frac{1-\rho-q}{\sigma^+_q (\rho+q)q}\sum_{j=1}^3 \eta_q^{(j)}(\rho)\mathcal{D}_q(x),
\end{equation}
where $\mathcal{D}_q(x)$ is defined by Equation~\eqref{defQ} and the quantities $\eta_q^{(j)}(\rho)$ for $j=1,2,3$ are defined by
\begin{eqnarray}
\eta_q^{(1)}(\rho)&=& \frac{2 q  ((\rho+q)(1-q)+2q\sqrt{\rho(1-q)})e^{\frac{\sqrt{1-q}}{\sqrt{1-q}-\sqrt{\rho}}} }{(1-q) (1-q-\sqrt{\rho(1-q)})}, \label{defeta1}\\
\eta_q^{(2)}(\rho) & = &  - \frac{2q (\rho+q) e^{\frac{\sqrt{1-q}+\sqrt{\rho}}{\sqrt{1-q}-\sqrt{\rho}}}}{\sigma_q^+},\label{defeta2}\\
\eta_q^{(3)}(\rho) &=& \frac{2q(\rho+q)(\sqrt{\rho(1-q)}-\rho)}{\sigma_q^+(q+\sqrt{\rho(1-q)})}   e^{\frac{\sqrt{\rho}}{(q+\sqrt{\rho(1-q)})(\sqrt{1-q}-\sqrt{\rho})}}.\label{defeta3}
\end{eqnarray}
\end{theorem}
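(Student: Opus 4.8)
To prove Theorem~\ref{asympOmegath}, the plan is to recover the tail of $\P(\Omega>x)$ by inverting the Laplace transform \eqref{laptransfoW}. Since $\int_0^\infty e^{-sx}\,\P(\Omega>x)\,dx=\dfrac{1-\mathbbm{E}(e^{-s\Omega})}{s}$, the first point is that, by the analyticity assumption of Section~\ref{lapinversion}, the only singularities of the right-hand side of \eqref{laptransfoW} lie on the cut $[\sigma_q^-,\sigma_q^+]$; moreover this transform is analytic at $s=0$, its value there being $\mathbbm{E}(\Omega)<\infty$. Hence the rightmost singularity is the branch point $s=\sigma_q^+=-(\sqrt{1-q}-\sqrt{\rho})^2$ of $\sqrt{\delta(s)}$, and it governs the tail. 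In particular I would verify that the apparent poles of the formula (coming from the factors $1/s$, $q+\rho+qs-q^2$, $P(s;\rho+q)$ and $u-v$) are removable — for instance the $y$-integrals in \eqref{Fuv} vanish as $s\to0$ because $U_q^-(0)=\rho+q$, cancelling the $1/P(s;\rho+q)$ pole — so that all subdominant singularities stay strictly to the left of $\sigma_q^+$ and contribute only exponentially negligible terms.

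The central and most delicate step is the local analysis of $F(s;\rho+q,q)$ and $E(s;q,q)$ as $s\to\sigma_q^+$. The elementary ingredients are $\delta(s)=(s-\sigma_q^-)(s-\sigma_q^+)$, whence $\sqrt{\delta(s)}\sim 2(\rho(1-q))^{1/4}\sqrt{s-\sigma_q^+}$, together with $U_q^\pm(\sigma_q^+)=q+\sqrt{\rho(1-q)}=:U^{\ast}$ and $U_q^+(s)-U_q^-(s)=\sqrt{\delta(s)}\to0$, so that $C_q^+(s)\sim-\sqrt{\rho(1-q)}/\sqrt{\delta(s)}\to-\infty$ and $C_q^-(s)=1-C_q^+(s)\to+\infty$. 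I would then propagate these through the auxiliary quantities: the $1^{\infty}$-type indeterminacies in $R(s;\xi)$ and $X(s;v)$ resolve, after taking logarithms, into explicit exponential limits such as $\exp\!\bigl(-\sqrt{\rho(1-q)}\,\xi/(U^{\ast}(U^{\ast}-\xi))\bigr)$, which is the very source of the exponential factors appearing in $\eta_q^{(1)},\eta_q^{(2)},\eta_q^{(3)}$, while the implicitly defined function $\mathcal{T}(s;w)$ of \eqref{defT}, whose exponent $1-C_q^+(s)$ diverges, develops a square-root branch point in $w$ at a point $w^{\ast}(s)$. Laplace's method applied to the $\xi$-integrals in \eqref{eqvexp}, \eqref{defL1}, \eqref{defL2} and to the $y$-integrals in \eqref{Fuv} — the position of $w^{\ast}(s)$ relative to the range of the composition $x(s)R(s;\xi)X(s;v)$ selecting the dominant endpoint or saddle — then produces, for each of the three surviving pieces, a singular behaviour near $\sigma_q^+$ of the same type as the one treated in \cite{GQS}, with prefactors that, after bookkeeping, assemble into $-\dfrac{1-\rho-q}{\sigma_q^+(\rho+q)q}\,\eta_q^{(j)}(\rho)$ for $j=1,2,3$.

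The last step is the inversion itself. Writing the Bromwich integral for $\P(\Omega>x)$, collapsing the contour onto the cut $[\sigma_q^-,\sigma_q^+]$ and keeping the dominant contribution near $\sigma_q^+$, the contribution of each singular piece is evaluated by the saddle-point argument of \cite{GQS}; it is converted into a multiple of $\mathcal{D}_q(x)=c_q(\rho)x^{-5/6}\exp\!\bigl(\sigma_q^+x-b_q(\rho)x^{1/3}\bigr)$, the $x^{1/3}$ in the exponent and the power $x^{-5/6}$ being the signature of a singularity of $(s-\sigma_q^+)^{-1/2}$-exponential type. Since the three pieces share the same dominant singularity structure at $\sigma_q^+$, hence the same $\mathcal{D}_q(x)$ after inversion, their inverse transforms add, which yields \eqref{asympOmegaeq}; as a consistency check, the same computation applied to the single-job transform reproduces \eqref{tailomega}.

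The hard part is the second step. Because $C_q^\pm(s)$ blow up at $\sigma_q^+$, the very integrands in the formulas for $E$ and $F$ become singular there and $\mathcal{T}$ is available only implicitly, so one must (i) pin down the location and nature of the singularity $w^{\ast}(s)$ of $\mathcal{T}(s;\cdot)$ and the way the range of $x(s)R(s;\xi)X(s;v)$ approaches it, (ii) justify Laplace's method for the $\xi$- and $y$-integrals \emph{uniformly} in $s$ in a neighbourhood of $\sigma_q^+$, since uniform estimates are what allow the local expansions to be fed into the Bromwich inversion, and (iii) carry out the lengthy bookkeeping that turns the collection of limiting exponential factors into the explicit constants $\eta_q^{(1)},\eta_q^{(2)},\eta_q^{(3)}$ of \eqref{defeta1}--\eqref{defeta3}. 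This is where the appendices do the real work.
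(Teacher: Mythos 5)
Your overall roadmap matches the paper's: decompose \eqref{laptransfoW} as in \eqref{laplacedecompo}, observe that the rational term and everything away from $\sigma_q^+$ is exponentially negligible, identify an essential singularity of type $e^{-c/\sqrt{\sigma_q^+-s}}$ at the branch point, and convert it by a saddle-point argument (Lemma~\ref{tail}) into the stretched-exponential factor $\mathcal{D}_q(x)$; the extra $1/\sigma_q^+$ in the prefactor indeed comes from integrating the inverted density once more. However, the step you yourself flag as the hard one is not carried out, and the mechanism you propose for it does not match what actually happens, so the proposal has a genuine gap rather than being an alternative proof.

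Concretely: you plan to locate the square-root branch point $w^{\ast}(s)$ of the implicit function $\mathcal{T}(s;\cdot)$ and to run Laplace's method on the $\xi$- and $y$-integrals by comparing $w^{\ast}(s)$ with the range of $x(s)R(s;\xi)X(s;v)$. In the paper the singularity of $\mathcal{T}$ never enters. Since $x(s)=1-U_q^-(s)/U_q^+(s)\to 0$ as $s\to\sigma_q^+$, the argument of $\mathcal{T}$ tends to $0$ and one simply linearizes, $\mathcal{T}(s;w)\sim 1+w$ (Equation~\eqref{approxT}), whence $\Psi_0(\mathcal{T}(\cdot))\sim -x(s)R(s;\xi)X(s;v)$ and likewise for $\Psi_1$ and for $Z\sim(q/\rho)\mathfrak{R}$; every term then collapses to integrals of $\mathfrak{R}(s;\zeta,\cdot)$ alone. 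The engine that produces both the essential singularity and the constants $\eta_q^{(j)}$ is the closed-form evaluation of the \emph{jump across the cut} of $\int\mathfrak{R}(s;\zeta,y)\,\frac{dy}{1-y}$ (Lemma~\ref{lemmatech11}): parametrizing $s=s(\theta)$, the jump reduces to a Beta-type integral equal to $\Gamma(\beta(\theta))\Gamma(1-\beta(\theta))F(1,\beta(\theta),1;z(\theta))$, and the reflection formula yields the factor $\pi/\cosh\left(\tfrac{\pi\cot\theta}{2}\right)\sim 2\pi e^{-\pi/(2\theta)}$ together with the explicit exponentials $e^{(\varepsilon(\zeta)(\Phi(\zeta,\theta)-\pi/2)+\Phi(1,\theta))\cot\theta}$ that become the constants $e^{\frac{\sqrt{1-q}}{\sqrt{1-q}-\sqrt{\rho}}}$, $e^{\frac{\sqrt{1-q}+\sqrt{\rho}}{\sqrt{1-q}-\sqrt{\rho}}}$, $e^{\frac{\sqrt{\rho}}{(q+\sqrt{\rho(1-q)})(\sqrt{1-q}-\sqrt{\rho})}}$ in \eqref{defeta1}--\eqref{defeta3}. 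Your proposal defers exactly this computation to ``bookkeeping''; moreover, for $\mathcal{I}_3$ the paper must additionally separate real and imaginary parts of two such integrals (showing $\iota_2(\theta)=o(e^{-\pi/\theta})$ and evaluating $\lim r_2(\theta)$ by an explicit change of variables), which no endpoint/saddle heuristic would deliver. Without these elements none of the three prefactors can be obtained, so the proof is incomplete as it stands.
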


To establish the above result, we shall inverse the various Laplace transforms  appearing in Equation~\eqref{laptransfoW}. As a matter of fact, the Laplace transform of $\Omega$ can be decomposed as 
\begin{equation}
    \label{laplacedecompo}
\mathbbm{E}(e^{-s \Omega}) = \frac{\displaystyle 1-\rho-q}{\displaystyle q(\rho+q)}\left( \mathcal{I}_1^*(s) +  \mathcal{I}_2^*(s) +  \mathcal{I}_3^*(s)+ \frac{q^3}{q +\rho+ qs -q^2} \right),
\end{equation}
where
\begin{equation}
   \mathcal{I}_1(s)  =  \frac{\displaystyle \rho(q s +\rho+2q(1-q))}{\displaystyle q +\rho+ qs -q^2}E(s;q,q) \label{defI1s},
   \end{equation}
  \begin{equation} 
    \mathcal{I}_2^*(s)  =   \frac{\rho^2 u}{(u-v)P(s;u)} \int_{u}^{U_q^-(s)} \left(1-Z\left(s;u,\frac{v}{u};y\right)\right) Z\left(s;u,\frac{v}{u};y\right)\frac{d y}{1-y},     \label{defI2s}
    \end{equation}
    and
    \begin{multline}
    \mathcal{I}_3^*(s)  = \frac{\rho^2 u}{(u-v)P(s;u)}\int_u^{U_q^-(s)} \left(1-Z\left(s;u,\frac{v}{u};y\right)\right) L_1\left(s;y, y Z\left(s;u,\frac{v}{u};y\right)   \right)     \frac{d y}{y} \\
 +  \frac{\rho ^2 u}{(u-v)P(s;u)}\int_u^{U_q^-(s)}\left(1-Z\left(s;u,\frac{v}{u};y\right)\right) L_2\left(s;y Z\left(s;u,\frac{v}{u};y\right) \right)     \frac{d y}{y} \label{defI3s}
\end{multline}
for $u=\rho+q$ and $v=q$.

We shall study the inverse functions of  the various Laplace transforms appearing in the right hand side of Equation~\eqref{laplacedecompo} and  determine their behavior for large values of their argument $x$. It is clear that from \cite{GQS,itc31}, the behavior is dominated by $e^{\sigma_q^+x}$ in the sense that
$$
\lim_{x \to \infty} \frac{\log(\P(\Omega>x))}{x}=\sigma_q^+.
$$. 

From Equation~\eqref{laplacedecompo}, we can observe  that the point $s_q^*$ defined by
$$
s_q^* = -\frac{\rho+q(1-q)}{q}
$$
is a potential singularity for the Laplace transform of $\Omega$. It is however easily checked that $s_q^*<\sigma_q^-$. The point $s_q^*$ is in the slit $(\sigma_q^-,\sigma_q^+)$ if and only if $2q > \sqrt{\rho(1-q)}$. The term
$$
\frac{q^3}{q +\rho+ q s -q^2} 
$$
gives rise by Laplace transform inversion to the term
$$
-\frac{q^2}{s_q^*} e^{s_q^* x}.
$$
This term is negligible when compared against $e^{\sigma_q^+ x}$ for large $x$.

For the inversion of the other terms appearing in Equation~\eqref{laplacedecompo}, we shall use the following two technical lemmas. In their proof, we define for $\zeta>0$
$$
\mathcal{K}_\zeta(v,\xi) = \left(\frac{\xi-v}{q-v}\right)^{\mathcal{C}(v)}
$$
with 
\begin{equation}
    \label{defCv}
    \mathcal{C}(v) = \frac{(v-q)^2}{\rho(1-q)-(v-q)^2}
\end{equation}
so that as shown in \cite{GQS}, we have
$$
\mathfrak{R}(s;\zeta,\xi) = \mathcal{K}_\zeta(U_q^+(s),\xi)\mathcal{K}_\zeta(U_q^-(s),\xi).
$$
The proof of the following lemma is given in Appendix~\ref{prooflammatech1}.

\begin{lemma}
\label{lemmatech11}
 For $\theta\in [0,\pi]$ and $t\in [0,1]$, we set
\begin{equation}
    \label{defstheta}
    s(\theta) = -1+q-\rho+2 \sqrt{\rho(1-q)}\cos{\theta}\in [\sigma_q^-,\sigma_q^+].
\end{equation}
and
\begin{multline}
    \label{defxithetat}
\xi(\theta,t) =  U_q^-(s(\theta)+i0)+t(U_q^+(s(\theta)+0i)-U_q^-(s(\theta)+0i)) \\ \in [ U_q^-(s(\theta)+i0), U_q^+(s(\theta)+i0)].
\end{multline}
If $\zeta< q+\sqrt{\rho(1-q)}$ then for all $t\in(0,1)$, the quantity  $\mathfrak{R}(s+0i; \zeta,\xi(\theta,t)) $  tends to 0 when $\theta$ tends to 0. In addition, we have
\begin{equation}
    \label{intR}
\int_{U_q^+(s(\theta)+0i)}^{U_q^-(s(\theta)+0i)}  \mathfrak{R}(s(\theta)+0i;\zeta,y)\frac{d y}{1-y} =- \frac{ m(\zeta,\theta) \pi i  e^{(\varepsilon(\zeta)( \Phi(\zeta,\theta) -\frac{\pi}{2})+\Phi(1,\theta) )  \cot\theta}   }{m(1,\theta) \cosh\left(\frac{\pi \cot\theta}{2}\right)},
\end{equation}
where 
\begin{equation}
    \label{defmtheta}
    m(\zeta,\theta) = \sqrt{(\zeta-q)^2+\rho(1-q) -2(\zeta-q)\sqrt{\rho(1-q)}\cos\theta},
\end{equation}
$\varepsilon(\zeta) =\mathrm{sgn} (  q+\sqrt{\rho(1-q)} -\zeta)$ with $\mathrm{sgn}(x)$ being equal to 1 if $x>0$ and $-1$ if $x<0$, and $\Phi(\zeta,\theta) \in \left[0,{\pi}\right]$ is such that
\begin{align}
\label{defPhitheta}
\cos \Phi(\zeta,\theta) &=\left\{ \begin{array}{ll}
\frac{\zeta -\sqrt{\rho(1-q)}\cos\theta-q}{m(\zeta,\theta)} & \mbox{if } \zeta> q+\sqrt{\rho(1-q)},\\ \\
\frac{\sqrt{\rho(1-q)}\cos\theta+q -\zeta}{m(\zeta,\theta)}   & \mbox{if } \zeta< q+\sqrt{\rho(1-q)},
\end{array}
\right.\quad \\  \sin \Phi(\zeta,\theta) &= \frac{{\sqrt{\rho(1-q)} \sin\theta}}{m(\zeta,\theta)}. \nonumber
\end{align}
\end{lemma}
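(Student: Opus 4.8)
The plan is to evaluate the contour integral $\int \mathfrak{R}(s(\theta)+0i;\zeta,y)\,dy/(1-y)$ by passing to the explicit factored form $\mathfrak{R}(s+0i;\zeta,\xi)=\mathcal{K}_\zeta(U_q^+(s),\xi)\,\mathcal{K}_\zeta(U_q^-(s),\xi)$ recalled just above the statement, and then to track the modulus and argument of each factor as $s$ ranges over the slit. First I would parametrise: for $s=s(\theta)\in[\sigma_q^-,\sigma_q^+]$ the discriminant $\delta(s(\theta))$ is negative, so $U_q^\pm(s(\theta)+0i)$ are complex conjugates, $U_q^\pm = \tfrac12(s(\theta)+1+\rho+q)\pm i\sqrt{-\delta(s(\theta))}/2$, with $\sqrt{-\delta(s(\theta))}=2\sqrt{\rho(1-q)}\,|\sin\theta|$ by \eqref{defstheta}. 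Consequently the real and imaginary parts of $U_q^\pm(s(\theta)+0i)$ are $q+\sqrt{\rho(1-q)}\cos\theta$ and $\pm\sqrt{\rho(1-q)}\sin\theta$. From \eqref{defCv}, $\mathcal{C}(U_q^\pm(s(\theta)))$ becomes $-1/2 \mp \tfrac{i}{2}\cot\theta$ after substituting $(U_q^\pm-q)^2$ and using $\delta$; the $-1/2$ is the exponent generating the $x^{-5/6}$–type algebraic decay (here it produces the $\cosh(\pi\cot\theta/2)$ in the denominator), and the $\mp\tfrac{i}{2}\cot\theta$ is the oscillatory/exponential part. The quantity $\xi-U_q^\pm(s(\theta))$ for $\xi$ real is a complex number whose modulus is exactly $m(\xi,\theta)$ of \eqref{defmtheta} (noting $|U_q^\pm-q|^2=\rho(1-q)$, so $m(\zeta,\theta)^2=|\zeta-U_q^\pm(s(\theta))|^2$ after expanding) and whose argument is $\pm\Phi(\zeta,\theta)$ with $\Phi$ as in \eqref{defPhitheta}; the sign distinction $\varepsilon(\zeta)$ according to whether $\zeta\gtrless q+\sqrt{\rho(1-q)}$ just records on which side of the real projection of $U_q^\pm$ the point $\zeta$ lies.

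Next I would write $\mathcal{K}_\zeta(U_q^+(s),\xi)\mathcal{K}_\zeta(U_q^-(s),\xi)=\bigl(\tfrac{\xi-U_q^+}{q-U_q^+}\bigr)^{\mathcal{C}(U_q^+)}\bigl(\tfrac{\xi-U_q^-}{q-U_q^-}\bigr)^{\mathcal{C}(U_q^-)}$ and take logarithms. Because the two bases are complex conjugates and the two exponents are complex conjugates, the log of the product is $2\Re$ of the log of one factor; substituting $\log(\xi-U_q^+) = \log m(\xi,\theta)+i\varepsilon(\xi)\Phi(\xi,\theta)$ (with a $\mp\pi$ correction absorbed into the $-\pi/2$ shift appearing in the exponent of \eqref{intR}), $\log(q-U_q^+)=\log\sqrt{\rho(1-q)}+i(\pi-\Phi(1,\theta))$ — since $q-U_q^+(s(\theta))$ has modulus $\sqrt{\rho(1-q)}$ and the relevant angle for $\zeta=1$ is $\Phi(1,\theta)$ — and $\mathcal{C}(U_q^+)=-\tfrac12-\tfrac{i}{2}\cot\theta$, one obtains after collecting real and imaginary parts that $\mathfrak{R}(s(\theta)+0i;\zeta,\xi)=\tfrac{m(\zeta,\theta)}{\sqrt{\rho(1-q)}}\,e^{(\varepsilon(\zeta)(\Phi(\zeta,\theta)-\pi/2)+\Phi(1,\theta))\cot\theta}\times(\text{a pure phase of modulus }O(1))$, with the algebraic prefactor $\sqrt{\rho(1-q)}/m(\zeta,\theta)$ from the $-1/2$ exponent. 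This already gives the first assertion: as $\theta\to 0^+$, for $\zeta<q+\sqrt{\rho(1-q)}$ one has $\varepsilon(\zeta)=1$, $\Phi(\zeta,0)=0$, $\Phi(1,0)=0$, so the exponent behaves like $(-\pi/2)\cot\theta\to-\infty$, forcing $\mathfrak{R}\to0$; one must check uniformity in $t\in(0,1)$, which holds because $\xi(\theta,t)$ stays within the vertical segment and $m(\xi(\theta,t),\theta)$ stays bounded.

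For the integral identity \eqref{intR} itself, I would note that $y\mapsto\mathfrak{R}(s(\theta)+0i;\zeta,y)/(1-y)$ is, along the segment $[U_q^+(s(\theta)+0i),U_q^-(s(\theta)+0i)]$, the derivative of an explicit primitive: because of the special form of the exponents $\mathcal{C}(U_q^\pm)=-\tfrac12\mp\tfrac{i}{2}\cot\theta$ whose sum is $-1$, the product $\mathcal{K}_\zeta(U_q^+,y)\mathcal{K}_\zeta(U_q^-,y)$ times $1/(1-y)$ integrates in closed form (this is the same mechanism that in \cite{GQS} produces the baseline tail \eqref{tailomega}); carrying out the substitution $y=U_q^-+t(U_q^+-U_q^-)$ turns the integral into $\int_0^1$ of $(\text{const})\cdot t^{-1/2-i\cot\theta/2}(1-t)^{-1/2+i\cot\theta/2}$ against a factor coming from $1/(1-y)$, i.e. a Beta-type integral, and the Beta function $B(\tfrac12-\tfrac{i}{2}\cot\theta,\tfrac12+\tfrac{i}{2}\cot\theta)=\pi/\cosh(\tfrac\pi2\cot\theta)$ is exactly where the denominator $\cosh(\pi\cot\theta/2)$ and the factor $\pi i$ in \eqref{intR} come from, the $i$ arising from $U_q^+-U_q^-=i\sqrt{-\delta}$. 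The main obstacle I anticipate is bookkeeping the branches of the logarithm consistently — tracking exactly which determination of $(\xi-U_q^\pm)^{\mathcal{C}(U_q^\pm)}$ is forced by the requirement $\sqrt{\delta(0)}>0$ and by continuity of $\mathfrak{R}$ as $s$ approaches the slit from above ($s+0i$), since a misplaced $\pm\pi$ in an argument changes the sign of the exponent and would flip the $-\pi/2$ and the overall sign in \eqref{intR}. I would handle this by checking the two limiting regimes $\theta\to0$ and $\theta\to\pi$ and the value at $\zeta=1$ (where $\Phi(1,\theta)$ is self-consistent) against the known job-sojourn asymptotics \eqref{tailomega}, and by verifying that the right-hand side of \eqref{intR} is purely imaginary when it should be (reflecting that the integrand over the conjugate-symmetric segment picks out an odd part), which pins down all sign conventions.
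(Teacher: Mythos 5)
Your outline coincides with the paper's: factor $\mathfrak{R}(s;\zeta,\xi)=\mathcal{K}_\zeta(U_q^+(s),\xi)\,\mathcal{K}_\zeta(U_q^-(s),\xi)$, put each factor in polar form on the slit using $U_q^\pm(s(\theta)+0i)=q+\sqrt{\rho(1-q)}\,e^{\pm i\theta}$ and $\mathcal{C}(U_q^\pm)=-\tfrac12\pm\tfrac{i}{2}\cot\theta$, and reduce \eqref{intR} to a Beta-type integral in $t$ whose evaluation through $\Gamma(\beta)\Gamma(1-\beta)=\pi/\sin(\pi\beta)$ produces the $\cosh(\pi\cot\theta/2)$. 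Two of your intermediate steps are, however, wrong. The reduction ``conjugate bases and conjugate exponents, hence take $2\Re$ of one logarithm'' fails because the integration variable runs over the \emph{complex} segment: $\xi(\theta,t)-U_q^+=-2i(1-t)\sqrt{\rho(1-q)}\sin\theta$ and $\xi(\theta,t)-U_q^-=2it\sqrt{\rho(1-q)}\sin\theta$ are purely imaginary, not conjugates of each other, so the product is not $|\cdot|^2$. The correct polar computation gives
\[
\mathfrak{R}(s(\theta)+0i;\zeta,\xi(\theta,t))=\frac{m(\zeta,\theta)}{2\sqrt{\rho(1-q)}\sin\theta}\,t^{-\frac12-\frac{i}{2}\cot\theta}(1-t)^{-\frac12+\frac{i}{2}\cot\theta}\,e^{\varepsilon(\zeta)\left(\Phi(\zeta,\theta)-\frac{\pi}{2}\right)\cot\theta},
\]
whereas your displayed formula for $\mathfrak{R}$ both drops the singular factor $t^{-1/2}(1-t)^{-1/2}/(2\sin\theta)$ — exactly what feeds the Beta integral — and inserts a spurious $\Phi(1,\theta)\cot\theta$ in the exponent. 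That term cannot come from $\log(q-U_q^+)$: one has $q-U_q^+(s(\theta)+0i)=-\sqrt{\rho(1-q)}\,e^{i\theta}$, whose argument is $\theta-\pi$, unrelated to $\Phi(1,\theta)$; moreover the normalizing denominator in $\mathcal{K}_\zeta$ must be $\zeta-U_q^\pm$ (of modulus $m(\zeta,\theta)$, which is where the prefactor $m(\zeta,\theta)$ actually originates), not $q-U_q^\pm$.

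The real gap is the evaluation of the weighted integral. Both $\Phi(1,\theta)$ and $m(1,\theta)$ in \eqref{intR} come exclusively from the weight $1/(1-y)$, via the Euler integral representation
\[
\int_0^1\frac{t^{\beta(\theta)-1}(1-t)^{-\beta(\theta)}\,dt}{1-U_q^-(s(\theta)+0i)-t\bigl(U_q^+(s(\theta)+0i)-U_q^-(s(\theta)+0i)\bigr)}=\frac{\Gamma(\beta(\theta))\Gamma(1-\beta(\theta))}{1-U_q^-(s(\theta)+0i)}\,F\bigl(1,\beta(\theta),1;z(\theta)\bigr),
\]
combined with $F(1,\beta,1;z)=(1-z)^{-\beta}$, $1-z(\theta)=e^{-2i\Phi(1,\theta)}$ and $1-U_q^-(s(\theta)+0i)=m(1,\theta)e^{i\Phi(1,\theta)}$. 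Your proposal replaces this crux by the claim that the integrand has ``an explicit primitive because the exponents sum to $-1$''; differentiating $(y-U_q^+)^{a}(y-U_q^-)^{1-a}$ produces $\mathfrak{R}$ times a factor \emph{linear} in $y$, not times $1/(1-y)$, so no such elementary primitive exists, and without the hypergeometric identity you cannot produce $m(1,\theta)$ or $e^{\Phi(1,\theta)\cot\theta}$. The first assertion of the lemma does survive your errors: from the correct formula above, for fixed $t\in(0,1)$ and $\zeta<q+\sqrt{\rho(1-q)}$ one has $\varepsilon(\zeta)=1$ and $\Phi(\zeta,\theta)=O(\theta)$, so the exponent behaves like $-\tfrac{\pi}{2}\cot\theta\to-\infty$ and dominates the $1/\sin\theta$ prefactor.
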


The second technical lemma, that we shall use in the proofs, gives the asymptotic behavior of an expression appearing when analyzing the asymptotic behavior of the Laplace inverse functions.

\begin{lemma}
\label{tail}
For large $x$, 
\begin{equation}
    \label{approxtech2}
e^{\sigma_q^+x }\int_0^\pi \theta e^{-\frac{\pi}{\theta} -\sqrt{\rho(1-q)} x \theta^2}d\theta \sim \mathcal{D}_q(x),
\end{equation}
where $\mathcal{D}_q(x)$ is defined by Equation~\eqref{defQ}.
\end{lemma}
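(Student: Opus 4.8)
The statement is a real-variable asymptotic estimate, so the plan is to evaluate
\[
J(x) := \int_0^\pi \theta\, e^{-\pi/\theta - \sqrt{\rho(1-q)}\,x\,\theta^2}\,d\theta
\]
by Laplace's method after a rescaling. Write $a=\sqrt{\rho(1-q)}$ and regard $\phi_x(\theta)=\pi/\theta+ax\theta^2$ as the phase to be minimised over $\theta>0$. It has the unique critical point $\theta^\star=(\pi/(2ax))^{1/3}$, which tends to $0$ as $x\to\infty$; in particular $\theta^\star<\pi$ for all $x$ large enough, so the upper endpoint $\pi$ is irrelevant. Substituting $\theta=\theta^\star w$ and using the identities $\pi/\theta=\mu_x/w$ and $ax\theta^2=\tfrac12\mu_x w^2$ with $\mu_x:=\pi^{2/3}(2ax)^{1/3}\to\infty$, one gets
\[
J(x)=(\theta^\star)^2\int_0^{\pi/\theta^\star} w\, e^{-\mu_x h(w)}\,dw,\qquad h(w)=\frac1w+\frac{w^2}{2}.
\]

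The phase $h$ is now fixed: it is smooth and strictly convex on $(0,\infty)$, with a unique minimum at $w=1$ where $h(1)=3/2$ and $h''(1)=3$, and $h(w)\to\infty$ as $w\to0^+$ and as $w\to\infty$. Next I would apply the classical Laplace estimate to $\int_0^{\pi/\theta^\star}w\,e^{-\mu_x h(w)}\,dw$: split the domain into a neighbourhood $\{|w-1|\le\delta\}$ of the minimum and its complement, bound the contribution of the complement by a term of the form $e^{-\mu_x(h(1)+\varepsilon)}$ (using $h(w)-h(1)\gtrsim\min(w^{-1},w^2)$ away from $w=1$, together with the fast decay of $e^{-\mu_x/w}$ near $0$ and of $e^{-\mu_x w^2/2}$ near the moving endpoint $\pi/\theta^\star$), and on the neighbourhood replace $h$ by its quadratic Taylor polynomial at $w=1$ and the slowly varying factor $w$ by $1$. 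This yields
\[
\int_0^{\pi/\theta^\star} w\, e^{-\mu_x h(w)}\,dw \;\sim\; e^{-\tfrac32\mu_x}\sqrt{\frac{2\pi}{3\mu_x}}\qquad(x\to\infty).
\]

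It remains to unwind the constants. From $\mu_x=\pi^{2/3}(2ax)^{1/3}$ and $a^{1/3}=(\rho(1-q))^{1/6}$ one obtains $\tfrac32\mu_x=3(\pi/2)^{2/3}(ax)^{1/3}=b_q(\rho)\,x^{1/3}$, and the prefactor $(\theta^\star)^2\sqrt{2\pi/(3\mu_x)}=(\pi/(2ax))^{2/3}\sqrt{2\pi/(3\mu_x)}$ simplifies, after collecting the powers of $2$, $\pi$ and $ax$, to $c_q(\rho)\,x^{-5/6}$. Hence $J(x)\sim c_q(\rho)\,x^{-5/6}e^{-b_q(\rho)x^{1/3}}$, and multiplying by $e^{\sigma_q^+x}$ recovers precisely $\mathcal{D}_q(x)$ as defined in \eqref{defQ}, which is \eqref{approxtech2}.

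The only genuinely delicate point is the rigorous control of the Laplace approximation, i.e.\ making the splitting above quantitative. Since in the $w$-variable the peak sits at the fixed location $w=1$ with Gaussian width $\asymp\mu_x^{-1/2}\to0$, it is well separated both from the singularity at $w=0$ and from the endpoint $\pi/\theta^\star\to\infty$ relative to that width, so the split is legitimate; one must nonetheless check that the tail bounds and the cubic remainder in the Taylor expansion of $\mu_x h$ around $w=1$ contribute only a factor $1+o(1)$. Everything else is bookkeeping with fractional exponents.
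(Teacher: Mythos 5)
Your proposal is correct and follows essentially the same route as the paper: rescale $\theta$ by the cube-root scale so that the phase becomes $\mu_x h$ with a fixed strictly convex $h$, apply Laplace's method at the interior minimum, and unwind the constants to recover $c_q(\rho)$ and $b_q(\rho)$ (the paper uses the scaling $(\pi/(\sqrt{\rho(1-q)}x))^{1/3}$, placing the critical point at $2^{-1/3}$ rather than at $1$, which is only a cosmetic difference). Your treatment of the error control is in fact slightly more careful than the paper's, which simply invokes Laplace's method.
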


\begin{proof}
We can write
\begin{multline*}
\int_0^\pi \theta e^{-\frac{\pi}{\theta} -\sqrt{\rho(1-q)} x \theta^2}d\theta =  \\ \left ( \frac{\pi}{\sqrt{\rho(1-q)} x} \right)^{\frac{2}{3}} \int_0^{(\sqrt{\rho(1-q)} x)^{\frac{1}{3}}\pi^{\frac{2}{3}}}  t e^{-\pi^{\frac{2}{3}}(\sqrt{\rho(1-q)} x)^{\frac{1}{3}}  \left(\frac{1}{t}+t^2  \right)  }dt.
\end{multline*}
By setting $h(t)=\frac{1}{t} +t^2$  and using Laplace's method, we note that the stationary point for function $h(t)$ is $t^*=2^{-\frac{1}{3}}$ with $h(t^*)= 3. 2^{-\frac{2}{3}}$ and $h''(t^*)=6$ so that when $x$ tends to infinity
\begin{multline*}
\int_0^{(\sqrt{\rho(1-q)} x)^{\frac{1}{3}}\pi^{\frac{2}{3}}} t  e^{-\pi^{\frac{2}{3}}(\sqrt{\rho(1-q)} x)^{\frac{1}{3}}  \left(\frac{1}{t}+t^2  \right)  }dt \sim  \\
\frac{1}{2^{\frac{1}{3}}}e^{-\pi^{\frac{2}{3}}(\sqrt{\rho(1-q)} x)^{\frac{1}{3}} h(t^*)} \int_{-\infty}^\infty    e^{-\pi^{\frac{2}{3}}(\sqrt{\rho(1-q)} x)^{\frac{1}{3}} h''(t^*) \frac{y^2}{2} } dy = \\
\frac{1}{2^{\frac{1}{3}}}\sqrt{\frac{\pi^{\frac{1}{3}}}{3(\sqrt{\rho(1-q)} x)^{\frac{1}{3}}}} e^{-3\left(\frac{\pi}{2}\right)^{\frac{2}{3}}(\sqrt{\rho(1-q)} x)^{\frac{1}{3}} }.
\end{multline*}

Hence, for large $x$
$$
 \int_0^\pi \theta e^{-\frac{\pi}{\theta} -\sqrt{\rho(1-q)} x \theta^2}d\theta \sim   \frac{1}{2^{\frac{1}{3}}\sqrt{3}}\left(\frac{\pi}{\sqrt{\rho(1-q)} x}\right)^{\frac{5}{6}} e^{-3\left(\frac{\pi}{2}\right)^{\frac{2}{3}}(\sqrt{\rho(1-q)} x)^{\frac{1}{3}} } 
$$
and Equation~\eqref{approxtech2} follows.
\end{proof}

\subsection{Asymptotic behavior of inverse Laplace functions}

The  inverse functions of the Laplace transforms $\mathcal{I}_j^*(s)$, $j=1,\ldots,3$ are given by 
\begin{equation}
\label{invlaplace}
\mathcal{I}_j(x) = \frac{-1}{2i \pi} \int_{\sigma_q^-}^{\sigma_q^+} \Delta \mathcal{I}_j^*(s) e^{sx} ds,
\end{equation}
where $$\Delta \mathcal{I}^*(s) = \mathcal{I}^*(s+0i) - \mathcal{I}^*(s-0i).$$
To study the behavior of $\mathcal{I}_j(x)$ for large $x$, we are led to analyze the behavior of $\mathcal{I}^*_j(s)$ for $s$ in the neighbourhood of $\sigma_q^+$. By using the variable change~\eqref{defstheta}, we note that  $s(\theta)$ is close to $\sigma_q^+$ when the parameter $\theta$ is close to 0.

We first have the following result, the proof is deferred to Appendix~\ref{prooflemmaI1}

\begin{lemma}
\label{lemmaI1}
For large $x$, 
\begin{equation}
       \label{asympI1}
 \mathcal{I}_1(x) \sim \eta_q^{(1)}(\rho)  \mathcal{D}_q(x)   ,
\end{equation} 
where $ \eta_q^{(1)}(\rho)$ is defined by Equation~\eqref{defeta1}.
\end{lemma}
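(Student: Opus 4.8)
The plan is to start from the inversion formula~\eqref{invlaplace} with $j=1$,
$$
\mathcal{I}_1(x)=\frac{-1}{2i\pi}\int_{\sigma_q^-}^{\sigma_q^+}\Delta\mathcal{I}_1^*(s)\,e^{sx}\,ds ,
$$
and to localise the integral near the branch point $s=\sigma_q^+$. With the parametrisation $s=s(\theta)$ of~\eqref{defstheta} one computes $\delta(s(\theta))=-4\rho(1-q)\sin^2\theta$, hence $U_q^\pm(s(\theta)+i0)=q+\sqrt{\rho(1-q)}\,e^{\pm i\theta}$, $U_q^+(s)-U_q^-(s)=2i\sqrt{\rho(1-q)}\sin\theta$, $ds=-2\sqrt{\rho(1-q)}\sin\theta\,d\theta$ and $e^{s(\theta)x}=e^{\sigma_q^+x}e^{-2\sqrt{\rho(1-q)}(1-\cos\theta)x}$. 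After substitution $\mathcal{I}_1(x)$ becomes $-\tfrac{\sqrt{\rho(1-q)}}{i\pi}e^{\sigma_q^+x}\int_0^\pi\Delta\mathcal{I}_1^*(s(\theta))\sin\theta\,e^{-2\sqrt{\rho(1-q)}(1-\cos\theta)x}\,d\theta$; since $x^{-1}\log\P(\Omega>x)\to\sigma_q^+$, only a (shrinking) neighbourhood of $\theta=0$ contributes to leading order, and there $1-\cos\theta\sim\theta^2/2$, so the weight is $e^{-\sqrt{\rho(1-q)}x\theta^2}(1+o(1))$, exactly the kernel of Lemma~\ref{tail}. It therefore suffices to show that $e^{\pi/\theta}\Delta\mathcal{I}_1^*(s(\theta))$ tends to a finite constant as $\theta\to0$; Lemma~\ref{tail} then yields $\mathcal{I}_1(x)\sim c\,\mathcal{D}_q(x)$ and one only has to check $c=\eta_q^{(1)}(\rho)$.

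The first reduction is routine. The rational factor $\tfrac{\rho(qs+\rho+2q(1-q))}{q+\rho+qs-q^2}$ of~\eqref{defI1s} has its only real pole at $s_q^*$, which lies strictly to the left of $\sigma_q^+$, so it is real-analytic near $\sigma_q^+$ and its possible jump there is negligible; combined with the Schwarz symmetry $E(\bar s;q,q)=\overline{E(s;q,q)}$ along the slit this gives
$$
\Delta\mathcal{I}_1^*(s(\theta))=\frac{\rho\,(qs(\theta)+\rho+2q(1-q))}{q+\rho+qs(\theta)-q^2}\cdot 2i\,\Im E(s(\theta)+i0;q,q),
$$
and, up to a relative $O(\theta^2)$, the prefactor may be frozen at its value at $\sigma_q^+$, namely $\tfrac{\rho((1-q)(\rho+q)+2q\sqrt{\rho(1-q)})}{\sqrt{\rho(1-q)}\,(\sqrt{\rho(1-q)}+2q)}$. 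Everything is thus reduced to the behaviour of $\Im E(s(\theta)+i0;q,q)$ as $\theta\to0$.

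For the core estimate I would use~\eqref{eqvexp} with $v=q$, together with the identities $X(s;q)=\tfrac{-U_q^+(s)q}{P(s;q)R(s;q)}$ and $x(s)=\tfrac{U_q^+(s)-U_q^-(s)}{U_q^+(s)}$, which turn the argument of $\mathcal{T}$ into $-\tfrac{q(U_q^+(s)-U_q^-(s))}{P(s;q)}\,\mathfrak{R}(s;q,\xi)$, whose scalar coefficient is $O(\sin\theta)$ while $\mathfrak{R}(s(\theta)+i0;q,\cdot)$ stays bounded on the relevant part of the contour by Lemma~\ref{lemmatech11} (applied with $\zeta=q<q+\sqrt{\rho(1-q)}$). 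Since $\Psi_0(s;1)=0$ and $\mathcal{T}(s;w)=1+w+\cdots$, one has $\Psi_0(s;\mathcal{T}(s;w))\sim -w$, whence, to leading order,
$$
E(s;q,q)\sim\frac{q\,Q_0(s;q)}{P(s;q)^2}\int_0^{U_q^-(s)}\mathfrak{R}(s;q,\xi)\,\frac{d\xi}{1-\xi},
$$
with $\tfrac{q\,Q_0(s;q)}{P(s;q)^2}$ real and analytic near $\sigma_q^+$. Through the swap $U_q^\pm(s-i0)=U_q^\mp(s+i0)$ across the slit, the jump of this expression at $s=s(\theta)$ comes from the $\xi$–jump of $\mathfrak{R}(s(\theta)+i0;q,\cdot)$ across the segment $[U_q^-(s(\theta)+i0),U_q^+(s(\theta)+i0)]$ (both endpoints collapsing to $q+\sqrt{\rho(1-q)}$ as $\theta\to0$), and a contour deformation identifies it, up to lower-order terms, with the integral evaluated in~\eqref{intR} of Lemma~\ref{lemmatech11}. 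In~\eqref{intR}, with $\varepsilon(q)=1$, $\Phi(q,0)=\Phi(1,0)=0$, $\Phi(q,\theta)\sim\theta$ and $\Phi(1,\theta)\sim\tfrac{\sqrt{\rho(1-q)}}{1-q-\sqrt{\rho(1-q)}}\theta$, the exponent $(\Phi(q,\theta)-\tfrac{\pi}{2}+\Phi(1,\theta))\cot\theta$ tends to $\tfrac{1-q}{1-q-\sqrt{\rho(1-q)}}-\tfrac{\pi}{2\theta}$, while $1/\cosh(\pi\cot\theta/2)\sim 2e^{-\pi/(2\theta)}$; their product gives the factor $e^{-\pi/\theta}$ together with the $O(1)$ exponential $e^{(1-q)/(1-q-\sqrt{\rho(1-q)})}=e^{\sqrt{1-q}/(\sqrt{1-q}-\sqrt{\rho})}$. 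Gathering the algebraic limits $Q_0(\sigma_q^+;q)=\sqrt{\rho(1-q)}(\sqrt{\rho(1-q)}+2q)$, $P(\sigma_q^+;q)=\rho(1-q)$, $m(q,0)=\sqrt{\rho(1-q)}$, $m(1,0)=1-q-\sqrt{\rho(1-q)}$, the frozen rational prefactor, and the constant from~\eqref{invlaplace}, and using $\sqrt{\rho(1-q)}=\sqrt{\rho}\sqrt{1-q}$, one finds $c=\eta_q^{(1)}(\rho)$ of~\eqref{defeta1}, so that~\eqref{asympI1} follows by Lemma~\ref{tail}.

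I expect the main obstacle to be precisely this core estimate: choosing the right homotopy class of the $\xi$–contour from $0$ to the now-complex endpoint $q+\sqrt{\rho(1-q)}e^{-i\theta}$ (in particular its position relative to $\xi=1$ and to the segment $[U_q^-,U_q^+]$); showing that the naive $O(\theta)$ contributions of the pieces not meeting that segment cancel, so that only~\eqref{intR} survives and Lemma~\ref{lemmatech11} is genuinely needed rather than a crude saddle-point estimate; justifying the linearisation of $\Psi_0(s;\mathcal{T}(s;\cdot))$ despite $|C_q^+(s(\theta)+i0)|=|\tfrac12+\tfrac{i}{2}\cot\theta|\to\infty$; and carrying the numerous algebraic prefactors through the limit $\theta\to0$ without error. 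One must also keep the branch of $\sqrt{\delta}$ consistent with $\sqrt{\delta(0)}>0$ throughout, since this fixes the sign of $\cot\theta$ in $C_q^+(s(\theta)+i0)$ and hence the sign of the decisive exponent. The localisation near $\theta=0$, the Schwarz-reflection reduction and the final appeal to Lemma~\ref{tail} are then routine.
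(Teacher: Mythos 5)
Your proposal is correct and follows essentially the same route as the paper's proof: reduce $\Delta\mathcal{I}_1^*$ near $\sigma_q^+$ via the linearization $\Psi_0(s;\mathcal{T}(s;w))\sim -w$ to the jump integral $\int_{U_q^+(s(\theta)+0i)}^{U_q^-(s(\theta)+0i)}\mathfrak{R}(s(\theta)+0i;q,\xi)\frac{d\xi}{1-\xi}$, evaluate it with Lemma~\ref{lemmatech11} at $\zeta=q$ (your limits $\Phi(q,\theta)=\theta$, $\Phi(1,\theta)\sim\frac{\sqrt{\rho(1-q)}}{1-q-\sqrt{\rho(1-q)}}\theta$ and the resulting exponent $e^{\sqrt{1-q}/(\sqrt{1-q}-\sqrt{\rho})}e^{-\pi/\theta}$ agree with the paper), and conclude with Lemma~\ref{tail}; the constants you assemble reproduce $\eta_q^{(1)}(\rho)$. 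The rigor issues you flag (contour homotopy for the $\xi$-integral, justifying the linearization as $|C_q^+(s(\theta)+0i)|\to\infty$) are real but are also left unaddressed in the paper's own argument.
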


For the second term $\mathcal{I}_2^*(s)$, we have the following asymptotic estimate, which is proved in Appendix~\ref{prooflemmaI2}.

\begin{lemma}
\label{lemmaI2}
For large $x$, 
\begin{equation}
       \label{asympI2}
 \mathcal{I}_2(x) \sim \eta_q^{(2)}(\rho)  \mathcal{D}_q(x)   
\end{equation} 
where $ \eta_q^{(2)}(\rho) $ is defined by Equation~\eqref{defeta2}.
\end{lemma}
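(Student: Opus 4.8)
The plan is to invert $\mathcal{I}_2^*(s)$ via \eqref{invlaplace} by collapsing the Bromwich contour onto $[\sigma_q^-,\sigma_q^+]$, computing the jump $\Delta\mathcal{I}_2^*(s)$, and reading off its behaviour as $s\to\sigma_q^+$, i.e.\ for $\theta\to0$ in \eqref{defstheta}; Lemmas~\ref{lemmatech11} and~\ref{tail} then finish the job. Specializing \eqref{defI2s} to $u=\rho+q$, $v=q$ and using $u-v=\rho$ in \eqref{defZ} gives $Z=q\mathfrak R/(\rho+q\mathfrak R)$, $1-Z=\rho/(\rho+q\mathfrak R)$ with $\mathfrak R=\mathfrak R(s;\rho+q,y)$, so
\[
\mathcal{I}_2^*(s)=\frac{\rho^{2}q(\rho+q)}{P(s;\rho+q)}\int_{\rho+q}^{U_q^-(s)}\frac{\mathfrak R(s;\rho+q,y)}{\bigl(\rho+q\,\mathfrak R(s;\rho+q,y)\bigr)^{2}}\,\frac{d y}{1-y}.
\]
Since the prefactor is rational in $s$, it is continuous across the slit and the jump is carried by the integral. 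On the slit $U_q^{\pm}(s-0i)=U_q^{\mp}(s+0i)$, and the swap $U_q^+\leftrightarrow U_q^-$ also swaps $C_q^+\leftrightarrow C_q^-$ while fixing $\mathfrak R$, so the integrand takes the same value on both edges; subtracting the two representations and deforming the path (which requires analyticity in the shrinking triangle with vertices $\rho+q$, $U_q^\pm(s(\theta)+0i)$, and in particular $y=1$ lying outside it, which holds because the chord $[U_q^-,U_q^+]$ meets $\mathbb{R}$ at $q+\sqrt{\rho(1-q)}<1$) leaves
\[
\Delta\mathcal{I}_2^*(s)=\frac{\rho^{2}q(\rho+q)}{P(s;\rho+q)}\int_{U_q^+(s+0i)}^{U_q^-(s+0i)}\frac{\mathfrak R(s+0i;\rho+q,y)}{\bigl(\rho+q\,\mathfrak R(s+0i;\rho+q,y)\bigr)^{2}}\,\frac{d y}{1-y}.
\]

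On the chord $y=\xi(\theta,t)$, $t\in(0,1)$, Lemma~\ref{lemmatech11} applies with $\zeta=\rho+q$, its hypothesis $\rho+q<q+\sqrt{\rho(1-q)}$ being equivalent to the stability condition $\rho<1-q$; thus $\mathfrak R(s(\theta)+0i;\rho+q,\xi(\theta,t))\to0$ as $\theta\to0$, so $\mathfrak R/(\rho+q\mathfrak R)^{2}\sim\mathfrak R/\rho^{2}$ and the jump integral is asymptotically $\rho^{-2}\int_{U_q^+}^{U_q^-}\mathfrak R\,\frac{dy}{1-y}$, given in closed form by \eqref{intR}. Letting $\theta\to0$ in \eqref{intR} with the identities $P(\sigma_q^+;\rho+q)=-\rho\sigma_q^+$, $\varepsilon(\rho+q)=+1$, $m(\rho+q,0)=\sqrt{\rho}(\sqrt{1-q}-\sqrt{\rho})$, $m(1,0)=\sqrt{1-q}(\sqrt{1-q}-\sqrt{\rho})$, $\Phi(\rho+q,0)=\Phi(1,0)=0$, and $\Phi(\zeta,\theta)\cot\theta\to\sqrt{\rho(1-q)}/m(\zeta,0)$ (from $\sin\Phi(\zeta,\theta)=\sqrt{\rho(1-q)}\sin\theta/m(\zeta,\theta)$), the exponent in \eqref{intR} behaves like $\frac{\sqrt{1-q}+\sqrt{\rho}}{\sqrt{1-q}-\sqrt{\rho}}-\frac{\pi}{2\theta}$ while $1/\cosh(\tfrac{\pi}{2}\cot\theta)\sim2e^{-\pi/(2\theta)}$, which yields
\[
\Delta\mathcal{I}_2^*(s(\theta))\sim\frac{2\pi i\,q(\rho+q)}{\sigma_q^+\sqrt{\rho(1-q)}}\,e^{\frac{\sqrt{1-q}+\sqrt{\rho}}{\sqrt{1-q}-\sqrt{\rho}}}\,e^{-\pi/\theta},\qquad\theta\to0.
\]

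Finally, substituting into \eqref{invlaplace} with $s=s(\theta)$, $ds=-2\sqrt{\rho(1-q)}\sin\theta\,d\theta$ and $s(\theta)=\sigma_q^+-2\sqrt{\rho(1-q)}(1-\cos\theta)$ gives $\mathcal{I}_2(x)=-\frac{\sqrt{\rho(1-q)}}{i\pi}\int_0^{\pi}\Delta\mathcal{I}_2^*(s(\theta))e^{s(\theta)x}\sin\theta\,d\theta$. For large $x$ this localizes at $\theta=0$ (Laplace/Watson method: away from $0$, $e^{s(\theta)x}$ is exponentially below $e^{\sigma_q^+x}$ and $\Delta\mathcal{I}_2^*$ is bounded), so replacing $\sin\theta$ by $\theta$, $e^{s(\theta)x}$ by $e^{\sigma_q^+x}e^{-\sqrt{\rho(1-q)}\theta^{2}x}$ and $\Delta\mathcal{I}_2^*(s(\theta))$ by its equivalent, the constants collapse to $-2q(\rho+q)/\sigma_q^+$ and the surviving integral is $e^{\frac{\sqrt{1-q}+\sqrt{\rho}}{\sqrt{1-q}-\sqrt{\rho}}}e^{\sigma_q^+x}\int_0^{\pi}\theta e^{-\pi/\theta-\sqrt{\rho(1-q)}\theta^{2}x}d\theta$, which Lemma~\ref{tail} identifies with $\mathcal{D}_q(x)$; hence $\mathcal{I}_2(x)\sim\eta_q^{(2)}(\rho)\mathcal{D}_q(x)$ with $\eta_q^{(2)}$ as in \eqref{defeta2}. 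The main obstacle is the uniformity behind these reductions: justifying the contour deformation for $\Delta\mathcal{I}_2^*$ (analyticity in the shrinking triangle, $y\neq1$), and that on the chord the integrand may be replaced by its $\mathfrak R\to0$ linearization and the $\theta$-integral by its neighbourhood of $0$ — all of which rely on uniform control of $\mathfrak R(s(\theta)+0i;\rho+q,\cdot)$ for $(\theta,t)$ near the dominant endpoint, available from the estimates underlying Lemma~\ref{lemmatech11}.
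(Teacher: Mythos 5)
Your proposal is correct and follows essentially the same route as the paper's Appendix C: collapse the inversion contour onto the slit, reduce the jump to the chord integral of $Z\sim\frac{q}{\rho}\mathfrak{R}(s;\rho+q,\cdot)$ via Lemma~\ref{lemmatech11} with $\zeta=\rho+q$, evaluate \eqref{intR} as $\theta\to0$, and conclude with Lemma~\ref{tail}. Your constants agree with the paper's (and your $e^{-\pi/\theta}$ decay rate is the correct one — the paper's intermediate display $e^{-\pi/(2\theta)}$ is a typo), and your explicit justification of the jump computation and of $\varepsilon(\rho+q)=+1$ merely fills in steps the paper leaves implicit.
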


Finally, for the last term $\mathcal{I}_3^*(s)$, we have the following result proved in Appendix~\ref{prooflemmaI3}.

\begin{lemma}
\label{lemmaI3}
For large $x$, 
\begin{equation}
       \label{asympI3}
 \mathcal{I}_3(x) \sim  \eta_q^{(3)}(\rho)\mathcal{D}_q(x),
\end{equation} 
\end{lemma}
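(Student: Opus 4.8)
The plan is to follow the same route as in the proofs of Lemmas~\ref{lemmaI1} and~\ref{lemmaI2}, the difference being that $\mathcal{I}_3^*$ carries the two nested integrals $L_1,L_2$ of~\eqref{defL1}--\eqref{defL2}, which makes the bookkeeping substantially heavier. First I would start from~\eqref{invlaplace} with $u=\rho+q$ and $v=q$, and parametrize the slit by $s=s(\theta)$ as in~\eqref{defstheta}, so that $ds/d\theta=-2\sqrt{\rho(1-q)}\sin\theta$ and $s(\theta)=\sigma_q^+-\sqrt{\rho(1-q)}\,\theta^2+O(\theta^4)$, hence $e^{s(\theta)x}\sim e^{\sigma_q^+x}e^{-\sqrt{\rho(1-q)}x\theta^2}$. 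Since $\mathcal{I}_3^*$ is real on $(0,\infty)$, Schwarz reflection gives $\mathcal{I}_3^*(s-0i)=\overline{\mathcal{I}_3^*(s+0i)}$ on the slit, so $\Delta\mathcal{I}_3^*(s)=2i\operatorname{Im}\mathcal{I}_3^*(s+0i)$ and $\mathcal{I}_3(x)$ equals a constant times $e^{\sigma_q^+x}$ times $\int_0^\pi\operatorname{Im}\mathcal{I}_3^*(s(\theta)+0i)\,\sin\theta\,e^{-\sqrt{\rho(1-q)}x\theta^2}\,d\theta$. As this weight concentrates at $\theta=0$ when $x\to\infty$, I only need the behavior of $\operatorname{Im}\mathcal{I}_3^*(s(\theta)+0i)$ as $\theta\to0$ — that is, as $s\to\sigma_q^+$, where $U_q^\pm(\sigma_q^+)=q+\sqrt{\rho(1-q)}$ — while a separate elementary estimate shows that the range of $\theta$ bounded away from $0$ contributes only $o(\mathcal{D}_q(x))$.

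Next I would simplify $\mathcal{I}_3^*(s)$ near $\theta=0$. Since stability ($\rho+q<1$) is equivalent to $\rho+q<q+\sqrt{\rho(1-q)}$, Lemma~\ref{lemmatech11} with $\zeta=\rho+q$ shows $\mathfrak{R}(s(\theta)+0i;\rho+q,\cdot)\to0$ on the relevant segment as $\theta\to0$, hence $Z(s;\rho+q,\tfrac{q}{\rho+q};y)\to0$ and $(1-Z)Z\sim\tfrac{q}{\rho}\,\mathfrak{R}(s;\rho+q,y)$. The crucial point is that the zeroth-order term vanishes: because $X(s;0)=0$, one has $\mathcal{T}(s;x(s)R(s;\xi)X(s;0))=\mathcal{T}(s;0)=1$ for every $\xi$ and $\Psi_0(s;1)=\Psi_1(s;1)=0$, so $L_1(s;y,0)=0$ and $L_2(s;0)=0$. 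One must therefore expand $L_1(s;y,yZ)$ and $L_2(s;yZ)$ to first order in their small second argument $yZ$; by~\eqref{derivePsi0} and the above, this differentiation brings in $\partial_v X(s;0)=-1/U_q^-(s)$ and the $\xi$-integral $\int_0^{U_q^-(s)}R(s;\xi)\tfrac{d\xi}{1-\xi}$, which factors out of the $y$-integral and, since $R(s;0)=1$, equals $\int_0^{U_q^-(s)}\mathfrak{R}(s;0,\xi)\tfrac{d\xi}{1-\xi}$. Collecting terms, the leading part of $\mathcal{I}_3^*(s)$ as $\theta\to0$ reduces to an explicit rational function of $U_q^\pm(s),\rho,q$ (with a finite limit at $\sigma_q^+$) multiplied by the product of $\int_0^{U_q^-(s)}\mathfrak{R}(s;0,\xi)\tfrac{d\xi}{1-\xi}$ and a combination of $\int_{\rho+q}^{U_q^-(s)}y\,\mathfrak{R}(s;\rho+q,y)\,dy$ and $\int_{\rho+q}^{U_q^-(s)}\mathfrak{R}(s;\rho+q,y)\,dy$.

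Then I would take imaginary parts. Each of these integrals splits as a real segment (from its left endpoint to $U_q^+(s)$, along which the two conjugate factors of $\mathfrak{R}$ make the integrand real and positive, so this part is real and $O(1)$ with a finite limit at $\sigma_q^+$) plus the complex arc $[U_q^+(s),U_q^-(s)]$, whose asymptotics follow from~\eqref{intR} of Lemma~\ref{lemmatech11} and are a constant times $e^{-\pi/\theta}$. Hence $\operatorname{Im}\mathcal{I}_3^*(s(\theta)+0i)$ is of order $e^{-\pi/\theta}$, with constant obtained by letting $\theta\to0$ in~\eqref{intR} using $\cosh(\tfrac{\pi}{2}\cot\theta)\sim\tfrac12 e^{\pi/(2\theta)}$, $\cot\theta=\tfrac1\theta+O(\theta)$, $\lim_{\theta\to0}\Phi(\zeta,\theta)\cot\theta=\tfrac{\sqrt{\rho(1-q)}}{|q+\sqrt{\rho(1-q)}-\zeta|}$ — which equals $\tfrac{\sqrt{\rho(1-q)}}{q+\sqrt{\rho(1-q)}}$ for $\zeta=0$ and $\tfrac{\sqrt{1-q}}{\sqrt{1-q}-\sqrt\rho}$ for $\zeta=\rho+q$ — and $\lim_{\theta\to0}\Phi(1,\theta)\cot\theta=\tfrac{\sqrt\rho}{\sqrt{1-q}-\sqrt\rho}$, together with the $\theta\to0$ limits of the $m(\zeta,\theta)/m(1,\theta)$ prefactors, of the real-segment integrals, and of the rational prefactor. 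Multiplying by $\sin\theta\sim\theta$ and assembling every constant should reproduce exactly $\eta_q^{(3)}(\rho)$ of~\eqref{defeta3}, after which Lemma~\ref{tail} converts $e^{\sigma_q^+x}\int_0^\pi\theta\,e^{-\pi/\theta-\sqrt{\rho(1-q)}x\theta^2}\,d\theta$ into $\mathcal{D}_q(x)$, yielding~\eqref{asympI3}.

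The main obstacle is this bookkeeping. One has to (i) justify that the first-order term in $yZ$ genuinely dominates, uniformly in $y$ along the complex path from $\rho+q$ to $U_q^-(s)$; (ii) compute the $\theta\to0$ limits of the real-segment integrals, where $C_q^\pm(s)\to\tfrac12\pm i\infty$ and the oscillatory phases $e^{\pm i(\cdots)/\theta}$ must be checked to cancel within each product of conjugate factors; and (iii) track the competing exponentials $e^{\pm\Phi(\zeta,\theta)\cot\theta}$ and $e^{\pm\frac{\pi}{2}\cot\theta}$ so that exactly one power of $e^{-\pi/\theta}$ survives and the precise constant $\eta_q^{(3)}(\rho)$ — in particular the factor $q+\sqrt{\rho(1-q)}$ in its exponent, which comes from the $\zeta=0$ limit above — emerges. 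Everything else is the asymptotic calculus already carried out in the proof of Lemma~\ref{tail}.
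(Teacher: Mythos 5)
Your overall architecture is exactly the paper's: parametrize the slit by $s(\theta)$, write $\Delta\mathcal{I}_3^*=2i\operatorname{Im}\mathcal{I}_3^*(s+0i)$, linearize $\mathcal{T}$, $\Psi_0$, $\Psi_1$ about $1$ (using $\Psi_0(s;1)=\Psi_1(s;1)=0$ and $X(s;v)\sim -v/U_q^-(s)$), replace $\mathfrak{R}(s;yZ,\xi)$ by $\mathfrak{R}(s;0,\xi)$ since $Z$ is exponentially small, and reduce $\mathcal{I}_3^*$ near $\sigma_q^+$ to an explicit prefactor times the product of $\int_0^{U_q^-(s)}\mathfrak{R}(s;0,\xi)\frac{d\xi}{1-\xi}$ with $\int_{\rho+q}^{U_q^-(s)}\frac{y-\rho-1-\sigma_q^+}{q\sigma_q^++q+\rho}\,\mathfrak{R}(s;\rho+q,y)\,dy$, then finish with Lemma~\ref{lemmatech11} and Lemma~\ref{tail}. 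That part is correct and is what the paper does.

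However, the step you defer to ``bookkeeping'' is the actual mathematical content of the proof, and your sketch does not contain the two facts that make the constant come out. Writing the two integrals as $r_1+i\iota_1$ and $r_2+i\iota_2$, one has $\operatorname{Im}(\text{product})=r_1\iota_2+\iota_1 r_2$, and you must decide which cross-term dominates. (i) The $y$-arc integral is \emph{not} generically smaller than the $\xi$-arc integral: by the same Beta-integral computation as in Lemma~\ref{lemmatech11}, each arc integral is naively of order $e^{-\pi/\theta}$. The paper shows $\iota_2(\theta)=o(e^{-\pi/\theta})$ only because of an exact leading-order cancellation, $\int_0^1(1-2it\theta)\,t^{-\beta}(1-t)^{\beta-1}dt=\Gamma(\beta)\Gamma(1-\beta)\bigl(1-2i\theta\beta(\theta)\bigr)$ with $2i\theta\beta(\theta)\to 1$; this is a genuine cancellation between the two moments, not a routine exponential comparison, and without it you would keep the term $r_1\iota_2$, whose coefficient $\lim r_1=\mathcal{R}_1(q)$ involves an incomplete Gamma function that does not appear in \eqref{defeta3}. (ii) With only $\iota_1 r_2$ surviving, you still need $\lim_{\theta\to 0}r_2(\theta)$ in closed form; the paper gets $-\frac{(\sqrt{\rho(1-q)}-\rho)^2}{(q+\sqrt{\rho(1-q)})^2}$ by an explicit substitution $u=\frac{\sqrt{1-q}(y-\rho-q)}{(\sqrt{1-q}-\sqrt\rho)(q+\sqrt{\rho(1-q)}-y)}$ followed by an integration by parts in which the exponential-integral terms cancel. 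Your proposal asserts that assembling the constants ``should reproduce exactly $\eta_q^{(3)}(\rho)$'' but supplies neither (i) nor (ii), so as written it is a correct strategy with a gap precisely at the step that determines $\eta_q^{(3)}$.
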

where  $\eta_q^{(3)}(\rho)$ is defined by Equation~\eqref{defeta3}.

\subsection{Proof of Theorem~\ref{asympOmegath}}

By definition, we have
$$
\E\left( e^{-s \Omega} \right) = -\int_0^\infty e^{-s x} d\P(\Omega>x).
$$
We hence deduce from Equations~\eqref{asympI1}, \eqref{asympI2} and \eqref{asympI3} that
$$
\frac{d}{dx} \P(\Omega>x) \sim   -\frac{1-\rho-q}{(\rho+q)q}\sum_{j=1}^3 \eta_q^{(j)}(\rho)\mathcal{D}_q(x)
$$
and Equation~\eqref{asympOmegaeq} easily follows via a simple integration.

The terms $\eta_q^{(1)}$ and $\eta_q^{(2)}$ are positive while the term $\eta_q^{(3)}$ is negative. We note that 
\begin{multline*}
\eta_q^{(1)}+ \eta_q^{(2)} = e^{\frac{\sqrt{1-q}}{\sqrt{1-q}-\sqrt{\rho}}}\left( \frac{4q\sqrt{\rho(1-q)}e^{\frac{\sqrt{1-q}}{\sqrt{1-q}-\sqrt{\rho}}} }{(1-q) (1-q-\sqrt{\rho(1-q)})} + \right.\\ \left. \frac{2q(\rho+q)}{1-q-\sqrt{\rho(1-q)}}\left( 1-  \frac{\sqrt{\rho(1-q)}}{q+\sqrt{\rho(1-q)}} e^{\left(\frac{\sqrt{\rho}}{q+\sqrt{\rho(1-q)}}-\sqrt{1-q}\right)\frac{1}{(\sqrt{1-q}-\sqrt{\rho})}} \right) \right).
\end{multline*}
The function $x \to \frac{x}{x\sqrt{1-q}+q}$ is increasing for $x>0$ and equal to $\sqrt{1-q}$ for $x=\sqrt{1-q}$. It follows that since $\sqrt{\rho}<\sqrt{1-q}$  the quantity
$$
\frac{\sqrt{\rho}}{q+\sqrt{\rho(1-q)}}-\sqrt{1-q}<0
$$
Hence,
$$
1-  \frac{\sqrt{\rho(1-q)}}{q+\sqrt{\rho(1-q)}} e^{\left(\frac{\sqrt{\rho}}{q+\sqrt{\rho(1-q)}}-\sqrt{1-q}\right)\frac{1}{(\sqrt{1-q}-\sqrt{\rho})}}>0
$$
and subsequently $\eta_q^{(1)}+ \eta_q^{(2)}>0$. It follows that the prefactor
$$
 -\frac{1-\rho-q}{\sigma^+_q (\rho+q)q}\sum_{j=1}^3 \eta_q^{(j)}(\rho)>  -\frac{1-\rho-q}{\sigma^+_q (\rho+q)q} \eta_q^{(1)}(\rho) = \frac{2(1-\rho-q)}{(\sigma_q^+)^2}e^{\frac{\sqrt{1-q}+ \sqrt{\rho}}{\sqrt{1-q}-\sqrt{\rho}}}.
$$

To conclude this section, it is worth noting that the tails of of the complementary probability distributions of the sojourn time $\omega$ of a job    v(given by Equation~\eqref{tailomega}) and that of an entire batch are of the same order of magnitude in the sense that
$$
\frac{\P(\omega>x)}{\P(\Omega>x)}  = 0(1) \leq 1-q.
$$

Finally, $q=0$, we recover the asymptotic behavior of the sojourn time $W$ of a job in the $M/M/1$-PS queue. As a matter of fact,
$$
\lim_{q\to 0} -\frac{1-\rho-q}{\sigma^+_q (\rho+q)q}\sum_{j=1}^3 \eta_q^{(j)}(\rho)=  \frac{2(1-\rho)}{(\sigma^+_0)^2} e^{\frac{1+\sqrt{\rho}}{1-\sqrt{\rho}}},
$$
which is the prefactor appearing in Equation~\eqref{tailomega} for $q=0$.

\section{Conclusion}
\label{conclusion}
By using the result established in \cite{guillemin2020sojourn} for the Laplace transform of the sojourn time $\Omega$ of a batch in the $M^{[X]}/M/1$ PS queue, we have derived the asymptotic behavior of $\P(\Omega>x)$ when $x$ tends to infinity. It turns out that up to a multiplying factor $\P(\Omega>x)$  and $\P(\omega>x)$, where $\omega$ is the sojourn time of a single job, exhibit the same behavior for lareg $x$. This indicates that for extreme values the difference between the sojourn time of a job and that of a batch are not significantly different. This could be taken into account when assessing the processing capacity of a server handling batches of jobs. 

\appendix

\section{Proof of Lemma~\ref{lemmatech11}}
\label{prooflammatech1}
Let us first assume that $\zeta > q+\sqrt{\rho(1-q)}$. For $\theta\in [0,\pi]$, we have
$$
U_q^+(s(\theta)+0i) = q+\sqrt{\rho(1-q)}e^{i\theta} \quad \mbox{and} \quad U_q^-(s(\theta)+0i) = q+\sqrt{\rho(1-q)}e^{-i\theta}.
$$
and for $t\in [0,1]$, we have
\begin{align*}
\mathcal{K}_{\zeta}(& U_q^+(s+ 0i),\xi(\theta,t)) = \left(\frac{\xi(\theta,t)-U_q^+(s+0i)}{\zeta-U_q^+(s+0i)}  \right)^{\mathcal{C}(U_q^+(s+0i))}\\
&= \left(\frac{-2i (1-t) \sqrt{\rho(1-q)} \sin\theta}{\zeta-q-\sqrt{\rho(1-q)}\cos\theta -i \sqrt{\rho(1-q)} \sin \theta}  \right)^{-\frac{1}{2}+\frac{i}{2}\cot\theta}
\\
&= m(\zeta,\theta)^{\frac{1}{2}-\frac{i}{2}\cot\theta} (2(1-t) \sqrt{\rho(1-q)}\sin\theta )^{-\frac{1}{2}+\frac{i}{2}\cot\theta} e^{-( \Phi(\zeta,\theta) -\frac{\pi}{2})(\frac{i}{2}+\frac{1}{2}\cot\theta)},
\end{align*}

Similarly,
\begin{multline*}
\mathcal{K}_{\zeta}(U_q^-(s+ 0i),\xi(\theta,t)=  \\ m(\zeta,\theta)^{\frac{1}{2}+\frac{i}{2}\cot\theta} (2t \sqrt{\rho(1-q)}\sin\theta )^{-\frac{1}{2}-\frac{i}{2}\cot\theta} e^{-( \Phi(\zeta,\theta) -\frac{\pi}{2})(-\frac{i}{2}+\frac{1}{2}\cot\theta)}
\end{multline*}
so that
\begin{multline*}
\mathfrak{R}(s(\theta)+0i; \zeta,\xi(\theta,t)) = \\ \frac{m(\zeta,\theta)}{2\sqrt{\rho(1-q)}\sin\theta} t^{-\frac{1}{2}-\frac{i}{2}\cot\theta}(1-t)^{-\frac{1}{2}+\frac{i}{2}\cot\theta}  e^{-( \Phi(\zeta,\theta) -\frac{\pi}{2})\cot\theta}.
\end{multline*}

In the case $\zeta < q+\sqrt{\rho(1-q)}$, similar computations show that
\begin{multline}
\label{repR}
\mathfrak{R}(s(\theta)+0i; \zeta,\xi(\theta,t)) = \\ \frac{m(\zeta,\theta)}{2\sqrt{\rho(1-q)}\sin\theta} t^{-\frac{1}{2}-\frac{i}{2}\cot\theta}(1-t)^{-\frac{1}{2}+\frac{i}{2}\cot\theta}  e^{( \Phi(\zeta,\theta) -\frac{\pi}{2})\cot\theta}.
\end{multline}
In this case, the module of $\mathfrak{R}(s+0i; \zeta,\xi(\theta,t)) $ is such that
$$
\left|\mathfrak{R}(s+0i; \zeta,\xi(\theta,t))  \right| \leq \frac{|\zeta-q|+\sqrt{\rho(1-q)}}{2\sqrt{\rho(1-q)}\sin\theta}t^{-\frac{1}{2}}(1-t)^{-\frac{1}{2}}  e^{( \Phi(\zeta,\theta) -\frac{\pi}{2})\cot\theta}.
$$
Since
$$
\Phi(\zeta,\theta)\sim\frac{\sqrt{\rho(1-q)}\theta}{\sqrt{\rho(1-q)}+q-\zeta}
$$
when $\theta$ tends to 0, we can conclude that if $\zeta< q+\sqrt{\rho(1-q)}$ then for all $t\in (0,1)$, $\mathfrak{R}(s+0i; \zeta,\xi(\theta,t)) $  tends to 0 when $\theta$ tends to 0.

By using the above results, we have by the variable change~\eqref{defxithetat}
\begin{multline*}
\int_{U_q^+(s(\theta)+0i)}^{U_q^-(s(\theta)+0i)}  \mathfrak{R}(s+0i;\zeta,y)\frac{d y}{1-y} = \\  
\int_0^1     \frac{ - m(\zeta,\theta)  e^{\varepsilon(\zeta)( \Phi(\zeta,\theta) -\frac{\pi}{2})\cot\theta} t^{-\frac{1}{2}-\frac{i}{2}\cot\theta}(1-t)^{-\frac{1}{2}+\frac{i}{2}\cot\theta}  i d t }{1- U_q^-(s(\theta)+0i) - t(U_q^+(s(\theta)+0i) - U_q^-(s(\theta)+0i))    }.
\end{multline*}

The integral
\begin{multline}
    \label{approxtech}
\int_{0}^{1}     \frac{t^{-\frac{1}{2}-\frac{i}{2}\cot\theta}(1-t)^{-\frac{1}{2}+\frac{i}{2}\cot\theta}  i dt }{1- U_q^-(s(\theta)+0i) -t(U_q^+(s(\theta)+0i)-U_q^-(s(\theta)+0i))} =\\ \frac{i}{1- U_q^-(s(\theta)+0i)} \Gamma\left(\beta(\theta)\right) \Gamma\left(1-\beta(\theta)\right) F\left(1, \beta(\theta), 1;z(\theta) \right),
\end{multline}
where
\begin{equation}
    \label{betatheta}
\beta(\theta) = \frac{1}{2}-\frac{i \cot\theta}{2}, \quad z(\theta) = \frac{U_q^+(s(\theta)+0i)-U_q^-(s(\theta)+0i)}{1- U_q^-(s(\theta)+0i)}
\end{equation}
and $F(\alpha,\beta,\gamma;z)$ denotes Gauss hypergeometric function \cite{Abramowitz}.

By using the fact that 
$$
\Gamma(1-\beta)\Gamma(\beta) = \frac{\pi}{\sin(\pi \beta)}
$$
and
$$
F(1,\beta,1;z) = \frac{1}{(1-z)^\beta},
$$
together with
$$
1-z(\theta)= e^{-2i \Phi(1,\theta)},
$$
where $\Phi(\zeta,\theta)$ is defined by Equation~\eqref{defPhitheta}, we obtain
\begin{multline}
\label{equationtech}
\int_{0}^{1}     \frac{t^{-\frac{1}{2}-\frac{i}{2}\cot\theta}(1-t)^{-\frac{1}{2}+\frac{i}{2}\cot\theta}  i dt }{1- U_q^-(s(\theta)+0i) -t(U_q^+(s(\theta)+0i)-U_q^-(s(\theta)+0i))} =\\ \frac{i}{1- U_q^-(s(\theta)+0i)} \frac{\pi}{\sin(\pi \beta(\theta))} e^{2i \Phi(1,\theta)\beta(\theta)} 
\end{multline}
and Equation~\eqref{intR} easily follows by using the fact that
$$
 \frac{1}{1- U_q^-(s(\theta)+0i)}= \frac{e^{-i \Phi(1,\theta)}}{m(1,\theta)}
$$
and
$$
\sin(\pi \beta(\theta))=\cosh\left(\frac{\pi \cot\theta}{2} \right).
$$
\section{Proof of Lemma~\ref{lemmaI1}}
\label{prooflemmaI1}

Let us first note that from Equation~\eqref{defI1s} 
$$
\mathcal{I}_1^*(s)  =   \frac{q s +\rho+2q(1-q)}{(U_q^+(s)-U_q^-(s))(1-q)}  \int_0^{U_q^-(s)}\Psi_0(s;\mathcal{T}(s;x(s) R(s;\xi)X(s;q))) \frac{d \xi}{1-\xi}.
$$
The Laplace inverse function $\mathcal{I}_1(x)$ is then defined by
\begin{equation}
\label{defmathcalE}
\mathcal{I}_1(x) = \frac{-1}{2i \pi} \int_{\sigma_q^-}^{\sigma_q^+} \Delta \mathcal{I}_1^*(s) e^{sx} ds.
\end{equation}
To study the behavior of $\mathcal{I}_1(x)$ for large $x$, we are led to consider the behavior of $\Delta \mathcal{I}_1^*(s)$ in the neighbourhood of $\sigma_q^+$.

When $s$ is close to $\sigma_q^+$,  $x(s\pm 0i)$ is close to 0 so that  for all $\xi \in [0, U_q^-(s\pm 0i)]$
\begin{equation}
\label{approxT}
\mathcal{T}(s\pm 0i;x(s\pm 0i) R(s\pm 0i;\xi)X(s\pm 0i;q))) \sim 1+ x(s\pm 0i) R(s\pm 0i;\xi)X(s\pm 0i;q)
\end{equation}
and then
\begin{multline}
    \label{approxPsi0}
\Psi_0(s\pm 0i;\mathcal{T}(s\pm 0i;x(s\pm 0i) R(s\pm 0i;\xi)X(s\pm 0i;q))) \sim  \\  - x(s\pm 0i) R(s\pm 0i;\xi)X(s\pm 0i;q).
\end{multline}

We have
$$
R(s\pm 0i;\xi)X(s\pm 0i;q)= \frac{-U_q^+(s\pm 0i) q}{\rho(1-q)} \mathfrak{R}(s\pm 0i;q;\xi),
$$
so  that for $s$ close to $\sigma_q^+$
$$
\mathcal{I}_1^*(s\pm 0i) \sim \frac{q(q s +\rho+2q(1-q))}{\rho(1-q)^2}\int_0^{U_q^-(s\pm0i)}\mathfrak{R}(s\pm 0i;q,\xi)\frac{d \xi}{1-\xi}.
$$
For $s\in[\sigma_q^-,\sigma_q^+]$, $U_q^+(s+0i)=\overline{U_q^-(s+0i)}=U_q^-(s-0i)$. It then follows that for $s$ close to $\sigma_q^+$ of the form~\eqref{defstheta} for small $\theta$
\begin{align*}
\Delta \mathcal{I}_1^*(s(\theta)) & \sim \frac{q(q \sigma_q^+ +\rho+2q(1-q))}{\rho(1-q)^2}\int_{U_q^+(s(\theta)+0i)}^{U_q^-(s(\theta)+0i)}   \mathfrak{R}(s(\theta)+ 0i;q,\xi)\frac{d \xi}{1-\xi}   \\
&\sim  \frac{q(q \sigma_q^+ +\rho+2q(1-q))}{\rho(1-q)^2} . \frac{- 2 i \pi   \sqrt{\rho(1-q)}   e^{\frac{\sqrt{1-q}}{\sqrt{1-q}-\sqrt{\rho}}}        }{\sqrt{1-q}(\sqrt{1-q}-\sqrt{\rho})}e^{ -\frac{\pi}{\theta}},
\end{align*}
where we have used Equation~\eqref{intR} for $\zeta=q$ and the fact that $\Phi(q,\theta)=\theta$, $m(q,\theta)=\sqrt{\rho(1-q)} $, and 
\begin{align*}
  m(1,\theta) &\sim \sqrt{1-q}(\sqrt{1-q}-\sqrt{\rho})  , \quad 
   \cosh\left(\frac{\pi \cot\theta}{2}\right)  \sim \frac{e^{\frac{\pi}{2\theta}}}{2},\\
   ( \theta-\Phi(1,\theta))\cot\theta & \sim \frac{\sqrt{1-q}}{\sqrt{1-q}-\sqrt{\rho}}.
\end{align*}
for small $\theta$.

From the above asymptotic estimate, we deduce that when $x$ is large, the inverse function $\mathcal{I}_1(x)$ defined by Equation~\eqref{defmathcalE} is such for large $x$  
\begin{multline*}
\mathcal{I}_1(x) \sim \\  \frac{q(q \sigma_q^+ +\rho+2q(1-q))}{\rho(1-q)^2} . \frac{2{\rho(1-q)}   e^{\frac{\sqrt{1-q}}{\sqrt{1-q}-\sqrt{\rho}}}        }{\sqrt{1-q}(\sqrt{1-q}-\sqrt{\rho})} e^{\sigma_q^+ x}   \int_0^\pi \theta e^{-\frac{\pi}{\theta} -\sqrt{\rho(1-q)} x \theta^2}d\theta.
\end{multline*}
and Equation~\eqref{asympI1} follows by using Lemma~\ref{tail}.

\section{Proof of Lemma~\ref{lemmaI2}}
\label{prooflemmaI2}

Given that $P(s;\rho+q)=-\rho s$, we have from Equation~\eqref{defI2s}
\begin{multline*}
\mathcal{I}_2^*(s) = \\-  \frac{(\rho+q)}{s } \int_{\rho+q}^{U_q^-(s)} \left(1-Z\left(s;\rho+q,\frac{q}{\rho+q};y\right)\right) Z\left(s;\rho+q,\frac{q}{\rho+q};y\right)\frac{d y}{1-y}.
\end{multline*}

By applying Lemma~\ref{lemmatech11} for $\zeta=\rho+q$, we have for $t\in [0,1]$ 
\begin{multline}
Z\left(s(\theta)\pm 0i; \rho+q,\frac{q}{\rho+q};\xi(\theta,t))\right) = \frac{q\mathfrak{R}(s(\theta)\pm 0i; \rho+q,\xi(\theta,t))}{\rho + q \mathfrak{R}(s(\theta)\pm 0i; \rho+q,\xi(\theta,t))}  \\ \sim \frac{q}{\rho} \mathfrak{R}(s(\theta)\pm 0i; \rho+q,\xi(\theta,t)).\label{approxZ}
\end{multline}
for small $\theta$. It follows that
\begin{align*}
\Delta \mathcal{I}_2^*(s(\theta))& \sim  -\frac{(\rho+q)}{s(\theta)} \int_{U_q^+(s(\theta)+0i)}^{U_q^-(s(\theta)+0i)}  Z\left(s(\theta)+0i;\rho+q,\frac{q}{\rho+q};y\right)\frac{d y}{1-y}\\
& \sim - \frac{(\rho+q) q}{\rho \sigma_q^+} \int_{U_q^+(s(\theta)+0i)}^{U_q^-(s(\theta)+0i)}    \mathfrak{R}(s(\theta)+ 0i; \rho+q,y) \frac{d y}{1-y}    
\end{align*}
for small $\theta$. By using Lemma~\ref{lemmatech11} for $\zeta = \rho+q$, we have for small $\theta$
\begin{align*}
\Delta \mathcal{I}_2^*(s(\theta))& \sim  
 \frac{(\rho+q) q}{\rho \sigma_q^+} . \frac{ m(\rho+q,\theta) \pi i  e^{(( \Phi(\rho+q,\theta) -\frac{\pi}{2})+\Phi(1,\theta) )  \cot\theta}   }{m(1,\theta) \cosh\left(\frac{\pi \cot\theta}{2}\right)} \\
 & \sim  
 \frac{(\rho+q) q}{\rho \sigma_q^+} . \frac{ 2 i \pi \sqrt{\rho}  e^{\frac{\sqrt{1-q}+\sqrt{\rho}}{\sqrt{1-q}-\sqrt{\rho}}} }{ \sqrt{1-q} } e^{-\frac{\pi}{2\theta}}  ,
\end{align*}
where we have used the fact that
\begin{align*}
m(\rho+q,\theta) &\sim \sqrt{\rho}(\sqrt{1-q}-\sqrt{\rho}),\\
m(\rho+q,\theta) &\sim \sqrt{1-q}(\sqrt{1-q}-\sqrt{\rho}),\\
(\Phi(\rho+q,\theta) +\Phi(1,\theta) )  \cot\theta &\sim \frac{\sqrt{1-q}+\sqrt{\rho}}{\sqrt{1-q}-\sqrt{\rho}}.
\end{align*}
for small $\theta$.

By Laplace inversion, we have
$$
\mathcal{I}_2(x) = -\frac{1}{2i\pi}\int_{\sigma_q^-}^{\sigma_q^+} \Delta \mathcal{I}_2^*(s)e^{s x} ds= -\frac{2 \sqrt{\rho(1-q)}}{2i\pi}\int_{0}^{\pi} \Delta \mathcal{I}_2^*(s(\theta))e^{s(\theta) x} \sin\theta d\theta 
$$
so that for large $x$
\begin{align*}
\mathcal{I}_2(x) \sim - \frac{2q (\rho+q) e^{\frac{\sqrt{1-q}+\sqrt{\rho}}{\sqrt{1-q}-\sqrt{\rho}}}}{\sigma_q^+} e^{x \sigma_q^+ } \int_0^\pi \theta e^{-x \theta^2 \sqrt{\rho(1-q)}-\frac{\pi}{2}}d\theta \\
\sim - \frac{2q (\rho+q) e^{\frac{\sqrt{1-q}+\sqrt{\rho}}{\sqrt{1-q}-\sqrt{\rho}}}}{\sigma_q^+} \mathcal{D}_q(x),
\end{align*}
where we have used Lemma~\ref{tail}.  This proves Equation~\eqref{asympI2}.

\section{Proof of Lemma\ref{lemmaI3}}
\label{prooflemmaI3}

The Laplace transform $\mathcal{I}^*_3(s)$ can be decomposed as 
$$
 \mathcal{I}^*_3(s) = \frac{u \rho^2}{(u-v)P(s;u)} I_3(s)
$$    
  with  $I_3(s)= I_{3,1}(s)+I_{3,2}(s)$, where
$$
I_{3,1}(s) =   \int_u^{U_q^-(s)} \left(1-Z\left(s;u,\frac{v}{u};y\right)\right) L_1\left(s;y, y Z\left(s;u,\frac{v}{u};y\right)   \right)     \frac{d y}{y}
$$
and
$$
I_{3,2}(s)= \int_u^{U_q^-(s)}\left(1-Z\left(s;u,\frac{v}{u};y\right)\right) L_2\left(s;y Z\left(s;u,\frac{v}{u};y\right) \right)     \frac{d y}{y}
$$
for $u=\rho+q$ and $v=q$. For these values of $u$ and $v$, we have
$$
 \frac{u \rho^2}{(u-v)P(s;u)} = -\frac{\rho+q}{s}.
$$

By Laplace inversion from Equation~\eqref{defI3s}, we are hence led to consider
$$
\mathcal{I}_3(x) =\frac{\rho+q}{2 i \pi} \int_{\sigma_q^-}^{\sigma_q^+} \Delta I_3(s) e^{sx} \frac{ds}{s},
$$
where $I_3(s) = I_{3,1}(s)+ I_{3,2}(s)$.

For $s\pm 0i$ close to $\sigma_q^+$, we use approximations~\eqref{approxT} and \eqref{approxPsi0} to obtain for fixed $u$ and $v$
\begin{equation}
    \label{approxL1}
    L_1(s\pm 0i;u,v) \sim \frac{v}{P(s,v)}\left(u \frac{Q_0(s;v)}{P(s;v)}+ v \frac{Q_1(s;v)}{P(s;v)^2}\right) \int_0^{U_q^-(s\pm0i)} \mathfrak{R}(s\pm 0i;v,\xi)\frac{d\xi}{1-\xi},
\end{equation}
where $L_1(s;u,v)$ is defined by Equation~\eqref{defL1}. Similarly, by using that for small $x$, $\Psi_1(1+x) \sim x$, we have
$$
L_2(s\pm 0i;v) \sim  - \frac{(1+\rho+s-v) v Q_0(s;v)^2}{P(s;v)^3} \int_0^{U_q^-(s\pm0i)} \mathfrak{R}(s\pm 0i;v,\xi)\frac{d\xi}{1-\xi},
$$
where $L_2(s,v)$ is defined by Equation~\eqref{defL2}.

For small $\theta$, we have by Approximations~\eqref{approxZ} and \eqref{approxL1}
\begin{multline*}
I_{3,1}(s(\theta)+0i)
\sim \\
 \int_{\rho+q}^{U_q^-(s(\theta)+0i)} \frac{ y Z\left(s(\theta)+0i;\rho+q,\frac{q}{\rho+q};y\right)  }{P(s(\theta)+0i,0)}\left(y \frac{Q_0(s(\theta)+0i;0)}{P(s(\theta)+0i;0)}\right) \frac{dy}{y}\\ \int_0^{U_q^-(s(\theta)+ 0i)} \mathfrak{R}\left(s(\theta)+ 0i; y Z\left(s(\theta)+0i;\rho+q,\frac{q}{\rho+q};y\right) ,\xi\right)\frac{d\xi}{1-\xi}.
\end{multline*}
The quantity $Z\left(s(\theta)+0i;\rho+q,\frac{q}{\rho+q};y\right)$ is exponentially decreasing in $1/\theta$ for all $y$ as $\theta$ goes to 0. Hence, 
$$
 \mathfrak{R}\left(s(\theta)+ 0i; y Z\left(s(\theta)+0i;\rho+q,\frac{q}{\rho+q};y\right) ,\xi\right) \sim  \mathfrak{R}\left(s(\theta)+ 0i; 0 ,\xi\right).
$$
It follows that for small $\theta$ by Approximation~\eqref{approxZ}
\begin{multline*}
I_{3,1}(s(\theta)+0i)
\sim 
\frac{q}{\rho(q\sigma_q^+ +q+\rho)}  \int_{\rho+q}^{U_q^-(s(\theta)+0i)} y  \mathfrak{R}(s(\theta)+ 0i; \rho+q,y) dy  \\ \int_{0}^{U_q^-(s(\theta)+0i)}  \mathfrak{R}\left(s(\theta)+ 0i; 0 ,\xi\right) \frac{d\xi}{1-\xi}.
\end{multline*}

By using similar arguments,  for small $\theta$
\begin{multline*}
  I_{3,2}(s(\theta)+0i) \sim 
-\frac{q(1+\rho+\sigma_q^+)}{\rho(q \sigma_q^+ +q+\rho)}  \int_{\rho+q}^{U_q^-(s(\theta)+0i)} \mathfrak{R}(s(\theta)+ 0i; \rho+q,y)  dy  \\ \int_{0}^{U_q^-(s(\theta)+0i)}  \mathfrak{R}\left(s(\theta)+ 0i; 0 ,\xi\right) \frac{d\xi}{1-\xi}  \end{multline*}
so that when $\theta$ is small
\begin{multline*}
 I_3(s(\theta)+0i) \sim \frac{q}{\rho} \int_{\rho+q}^{U_q^-(s(\theta)+0i)} \frac{y- \rho-1-\sigma_q^+}{q\sigma_q^+ +q+\rho} \mathfrak{R}(s(\theta)+ 0i; \rho+q,y)     dy  \\ \int_{0}^{U_q^-(s(\theta)+0i)}\mathfrak{R}\left(s(\theta)+ 0i; 0 ,\xi\right)   \frac{d\xi}{1-\xi}.
   \end{multline*}

Let $r_1(\theta)$ and $\iota_1(\theta)$ denote the real and imaginary parts of the quantity
$$
\int_{0}^{U_q^-(s(\theta)+0i)} \mathfrak{R}\left(s(\theta)+ 0i; 0 ,\xi\right)   \frac{d\xi}{1-\xi}, 
$$
respectively.  Similarly, let $r_2(\theta)$ and $\iota_2(\theta)$ be the real and imaginary parts of the quantity
$$
 \int_{\rho+q}^{U_q^-(s(\theta)+0i)}   \frac{y- \rho-1-\sigma_q^+}{q\sigma_q^+ +q+\rho} \mathfrak{R}(s(\theta)+ 0i; \rho+q,y)      dy , 
$$
respectively.  When $\theta$ is small
$$
\Delta I_3(s(\theta)) \sim \frac{2 iq}{\rho} (r_1(\theta)\iota_2(\theta)+\iota_1(\theta)r_2(\theta)).
$$

By definition and by using Lemma~\ref{lemmatech11} for $\zeta=0$
\begin{align*}
 2 i \iota_1(\theta) &=
\int_{U_q^+(s(\theta)+0i)}^{U_q^-(s(\theta)+0i)}  \mathfrak{R}\left(s(\theta)+ 0i; 0 ,\xi\right)   \frac{d\xi}{1-\xi}
= - \frac{ m(0,\theta) \pi i  e^{(( \Phi(0
,\theta) -\frac{\pi}{2})+\Phi(1,\theta) )  \cot\theta}   }{m(1,\theta) \cosh\left(\frac{\pi \cot\theta}{2}\right)}\\
&\sim -2i \pi \frac{q+\sqrt{\rho(1-q)}}{\sqrt{1-q}(\sqrt{1-q}-\sqrt{\rho})} e^{\frac{\sqrt{\rho}}{(\sqrt{1-q}-\sqrt{\rho})(q+\sqrt{\rho(1-q)}}}e^{-\frac{\pi}{\theta}}
\end{align*}
for small $\theta$. It follows that 
\begin{equation}
\iota_1(\theta) \sim -\frac{\pi (q+\sqrt{\rho(1-q)})e^{\frac{\sqrt{\rho(1-q)}}{q+\sqrt{\rho(1-q)}}+\frac{\sqrt{\rho}}{\sqrt{1-q}-\sqrt{\rho}}}}{1-q-\sqrt{\rho(1-q)}}e^{-\frac{\pi}{\theta}}, \label{iota1}
\end{equation}
when $\theta$ is small.

For the real part, we note that for $\xi \in [U_q^+(s(\theta)+0i),U_q^-(s(\theta)+0i)]$
$$
\left(1-\frac{\xi}{U_q^-(s(\theta)+0i)}\right)^{C_q^-(s(\theta)+0i)-1} =  \left(\frac{m(\xi,\theta)}{m(0,\theta)}e^{-i(\Phi(\xi,\theta)-\Phi(0,\theta)}\right)^{-\frac{1}{2}-\frac{i}{2}\cot\theta}
$$
It follows that when $\theta$ tend to 0,
\begin{multline*}
\int_{0}^{U_q^-(s(\theta)+0i)}  \mathfrak{R}\left(s(\theta)+ 0i; 0 ,\xi\right)   \frac{d\xi}{1-\xi} \sim \\ \int_{0}^{q+\sqrt{\rho(1-q)}} \frac{q+\sqrt{\rho(1-q)}}{q+\sqrt{\rho(1-q)}-\xi}e^{-  \frac{\sqrt{\rho(1-q)}\xi}{(q+\sqrt{\rho(1-q)})( q+\sqrt{\rho(1-q)}-\xi)}    }       \frac{d\xi}{1-\xi}.
\end{multline*}
Since the latter integral is real, a simple change of variable yields 
\begin{multline}
\lim_{\theta \to 0} r_1(\theta) = \mathcal{R}_1(q) \stackrel{def}{=}   \frac{q+\sqrt{\rho(1-q)}}{1-q- \sqrt{\rho(1-q)}}      e^{\frac{\sqrt{\rho(1-q)}}{ ( q+\sqrt{\rho(1-q)})(1-q- \sqrt{\rho(1-q)})}}  \\   \Gamma\left(0, \frac{\sqrt{\rho(1-q)}}{ ( q+\sqrt{\rho(1-q)})(1-q- \sqrt{\rho(1-q)})}  \right)     \label{r1}
\end{multline}

We now analyze the functions $r_2(\theta)$ and $\iota_2(\theta)$ for small $\theta$. By using the same arguments as above, we have as $\theta$ tends to 0
\begin{align*}
\mathfrak{R}\left(s(\theta)+0i;\rho+q,y\right) & \to  \frac{m(\rho+q,0)}{m(y,0)} e^{-\left( \frac{\sqrt{\rho(1-q}}{m(y,0)} - \frac{\sqrt{\rho(1-q}}{m(\rho+q,0)} \right)}\\
& = \frac{\sqrt{\rho(1-q)}-\rho}{q+\sqrt{\rho(1-q)}-y}e^{\frac{\sqrt{1-q}(\rho+q-y)}{(\sqrt{1-q}-\sqrt{\rho})(q+\sqrt{\rho(1-q)}-y)}}
\end{align*}
so that 
\begin{multline*}
r_2(\theta) \to \\  \frac{\sqrt{\rho(1-q)}-\rho}{(q +\sqrt{\rho(1-q)})^2}   \int_{\rho+q}^{q+\sqrt{\rho(1-q)}}   \frac{y- \rho-1-\sigma_q^+}{q+\sqrt{\rho(1-q)}-y}e^{-\frac{\sqrt{\rho(1-q)}(y-\rho-q)}{(\sqrt{\rho(1-q)}-{\rho})(q+\sqrt{\rho(1-q)}-y)}} dy.
\end{multline*}

By setting
$$
u =\frac{\sqrt{(1-q)}(y-\rho-q)}{(\sqrt{1-q}-\sqrt{\rho})(q+\sqrt{\rho(1-q)}-y)} 
$$
so that
$$
y(u) = \frac{(\sqrt{1-q}-\sqrt{\rho})(q+\sqrt{\rho(1-q)})u + \sqrt{1-q}(\rho+q)}{(\sqrt{1-q}-\sqrt{\rho})u + \sqrt{1-q}},
$$
we have
\begin{multline*}
 \int_{\rho+q}^{q+\sqrt{\rho(1-q)}}   \frac{y- \rho-1-\sigma_q^+}{q+\sqrt{\rho(1-q)}-y}e^{-\frac{\sqrt{\rho(1-q)}(y-\rho-q)}{(\sqrt{\rho(1-q)}-{\rho})(q+\sqrt{\rho(1-q)}-y)}} dy = \\
- \int_0^\infty e^{-u} dy(u) -\sqrt{\rho(1-q)} \int_0^\infty \frac{e^{-u}}{q+\sqrt{\rho(1-q)}-y(u)} dy(u). 
\end{multline*}
It is easily checked that
$$
d y(u) = \frac{\sqrt{\rho(1-q)}(\sqrt{1-q}-\sqrt{\rho})^2}{((\sqrt{1-q}-\sqrt{\rho})u + \sqrt{1-q})^2}du
$$
and
$$
\frac{dy(u)}{q+\sqrt{\rho(1-q)}-y(u)} = \frac{\sqrt{1-q}-\sqrt{\rho}}{(\sqrt{1-q}-\sqrt{\rho})u + \sqrt{1-q}}du.
$$
Now, an integration by parts yields
$$
 \int_0^\infty e^{-u} dy(u) = \sqrt{\rho(1-q)}-\rho-  \int_0^\infty \frac{\sqrt{\rho(1-q)}(\sqrt{1-q}-\sqrt{\rho})}{(\sqrt{1-q}-\sqrt{\rho})u + \sqrt{1-q}}    e^{-u} du
$$
and we eventually obtain
\begin{multline*}
 \int_{\rho+q}^{q+\sqrt{\rho(1-q)}}   \frac{y- \rho-1-\sigma_q^+}{q+\sqrt{\rho(1-q)}-y}e^{-\frac{\sqrt{\rho(1-q)}(y-\rho-q)}{(\sqrt{\rho(1-q)}-{\rho})(q+\sqrt{\rho(1-q)}-y)}} dy = \\
- (\sqrt{\rho(1-q)}-\rho)
\end{multline*}
so that
$$
r_2(\theta) \to -  \frac{(\sqrt{\rho(1-q)}-\rho)^2}{(q +\sqrt{\rho(1-q)})^2} 
$$
when $\theta$ tends to 0.

The imaginary part $\iota_2(\theta)$ is such that
$$
2 i \iota_2(\theta) \sim \int_{U_q^+(s(\theta)+0i)}^{U_q^-(s(\theta)+0i)} \frac{y- \rho-1-\sigma_q^+}{q\sigma_q^+ +q+\rho}     \mathfrak{R}\left(s(\theta)+0i;\rho+q,y\right) dy  
$$
for small $\theta$.
By using Equation~\eqref{repR} for $\zeta= \rho+q$, we have for small $\theta$
\begin{multline*}
2 i \iota_2(\theta) \sim \frac{i \rho  \sqrt{1-q}(\sqrt{1-q}-\sqrt{\rho}) e^{\frac{\sqrt{\rho(1-q)}}{\sqrt{\rho(1-q)}-\rho}}} {(q+\sqrt{\rho(1-q)})^2} e^{-\frac{\pi}{2\theta}}   \\  \int_0^1(1- 2i t \theta ) t^{-\frac{1}{2}-\frac{i \cot\theta}{2}}(1-t)^{-\frac{1}{2} +\frac{i\cot\theta}{2}} dt.
\end{multline*}

We have for small $\theta$
\begin{multline*}
 \int_0^1(1-2 it \theta ) t^{-\frac{1}{2}-\frac{i \cot\theta}{2}}(1-t)^{-\frac{1}{2} +\frac{i\cot\theta}{2}} dt = \\ \Gamma(\beta(\theta)\Gamma(1-\beta(\theta))F(-1,\beta(\theta),1;2i \theta) \sim -2i \pi  \theta  e^{-\frac{\pi}{2\theta}},
 \end{multline*}
 where $\beta(\theta)$ is defined by Equation~\eqref{betatheta}.  Hence $\iota_2(\theta) = o\left(e^{-\frac{\pi}{\theta}} \right)$ for small $\theta$.

It follows that for small $\theta$
$$
\Delta I_3(s(\theta)) \sim \frac{2 i \pi q (\sqrt{1-q}-\sqrt{\rho}) e^{\frac{\sqrt{\rho}}{(q+\sqrt{\rho(1-q)})(\sqrt{1-q}-\sqrt{\rho})}}}{\sqrt{1-q}(q+\sqrt{\rho(1-q)})}e^{-\frac{\pi}{\theta}}. 
$$
We deduce that for large $x$
\begin{multline*}
I_3(x) \sim \\  \frac{\rho+q}{2 i\pi \sigma_q^+} \int_0^\pi  \frac{2 i \pi q (\sqrt{1-q}-\sqrt{\rho}) e^{\frac{\sqrt{\rho}}{(q+\sqrt{\rho(1-q)})(\sqrt{1-q}-\sqrt{\rho})}}}{\sqrt{1-q}(q+\sqrt{\rho(1-q)})}e^{-\frac{\pi}{\theta}} 2 \sqrt{\rho(1-q)}\sin\theta e^{s(\theta)x} d\theta
\end{multline*}
and Equation~\eqref{asympI3} follows.

\bibliographystyle{plain}
\bibliography{biblio}
\end{document}